\newcommand{\etal}{\emph{et~al.}}
\newcommand{\fd}{Fr\'echet distance}
\newcommand{\fm}{Fr\'echet matching}
\renewcommand{\paragraph}[1]{\smallskip\noindent{\bf\sffamily #1.}}
\let\doendproof\endproof
\renewcommand\endproof{~\hfill\qed\doendproof}
\newcommand{\shrinkfiguregap}{\vspace{-0.4\baselineskip}}
\title{Locally Correct Fr\'echet Matchings\thanks{M. Buchin is supported by the Netherlands Organisation for Scientific Research (NWO) under project no.~612.001.106. W. Meulemans and B. Speckmann are supported by the Netherlands Organisation for Scientific Research (NWO) under project no.~639.022.707. A preliminary version of this paper will appear in Proc.\ 20th European Symposium on Algorithms (ESA 2012).}} 
\author{Kevin Buchin \and Maike Buchin \and Wouter Meulemans \and Bettina Speckmann
\institute{Dep. of Mathematics and Computer Science, TU Eindhoven, The Netherlands.\\ \email{k.a.buchin@tue.nl} \qquad \email{m.e.buchin@tue.nl} \qquad \email{w.meulemans@tue.nl} \qquad \email{speckman@win.tue.nl}}
}
\begin{document}
\mainmatter

\maketitle
\thispagestyle{plain}

\begin{abstract}
The \fd\ is a metric to compare two curves, which is based on monotonous matchings between these curves. We call a matching that results in the \fd\ a \fm. There are often many different \fm s and not all of these capture the similarity between the curves well. We propose to restrict the set of \fm s to ``natural'' matchings and to this end introduce \emph{locally correct} \fm s. We prove that at least one such matching exists for two polygonal curves and give an $O(N^3 \log N)$ algorithm to compute it, where $N$ is the total number of edges in both curves. We also present an $O(N^2)$ algorithm to compute a locally correct discrete \fm.
\end{abstract}

\section{Introduction}\label{sec:introduction}

Many problems ask for the comparison of two curves. Consequently, several distance measures have been proposed for the similarity of two curves $P$ and $Q$, for example, the Hausdorff and the \fd. Such a distance measure simply returns a number indicating the (dis)similarity.
However, the Hausdorff and the \fd\ are both based on matchings of the points on the curves.
The distance returned is the maximum distance between any two matched points.
The \fd\ uses \emph{monotonous matchings} (and limits of these): if point $p$ on $P$ and $q$ on $Q$ are matched, then any point on $P$ after $p$ must be matched to $q$ or a point on $Q$ after $q$.
The \emph{\fd}\ is the maximal distance between two matched points minimized over all monotonous matchings of the curves.
Restricting to monotonous matchings of only the vertices results in the \emph{discrete \fd}.
We call a matching resulting in the (discrete) \fd\ a \emph{(discrete) \fm}. See Section~\ref{sec:prelims} for more details.

There are often many different \fm s for two curves.
However, as the \fd\ is determined only by the maximal distance, not all of these matchings capture the similarity between the curves well (see Fig.~\ref{fig:badmatch}).
There are applications that directly use a matching, for example, to map a GPS track to a street network~\cite{Wenk2006} or to morph between the curves~\cite{Efrat2002}.
In such situations
a ``good'' matching is important.
We believe that many applications of the (discrete) \fd, such as protein alignment~\cite{Wylie2012} and detecting patterns in movement data~\cite{bbgll-dcpcs-11}, would profit from good \fm s.

\begin{figure}[t]
  \centering
  \includegraphics{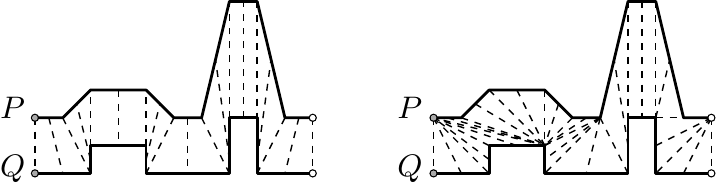}
  \shrinkfiguregap
  \caption{Two Fr\'echet matchings for curves $P$ and $Q$.}
  \label{fig:badmatch}
\end{figure}

\begin{figure}[b]
  \centering
  \includegraphics{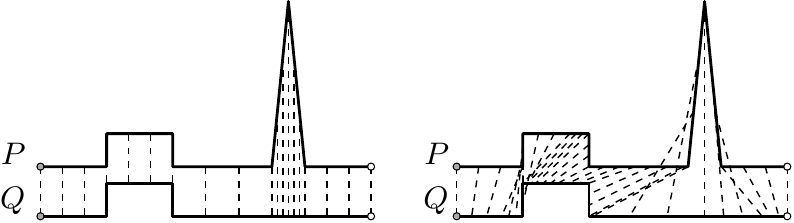}
  \shrinkfiguregap
  \caption{Two Fr\'echet matchings. Right: the result of speed limits is not locally correct.}
  \label{fig:speedcounter}
\end{figure}

\paragraph{Results}
We restrict the set of \fm s to ``natural'' matchings by introducing \emph{locally correct} \fm s:
matchings that for any two matched subcurves are again a \fm\ on these subcurves.
In Section~\ref{sec:lcm} we prove that there exists such a locally correct \fm\ for any two polygonal curves.
Based on this proof we describe in Section~\ref{sec:algorithm} an $O(N^3 \log N)$ algorithm to compute such a matching, where $N$ is the total number of edges in both curves.
We consider the discrete \fd\ in Section~\ref{sec:discrete} and give an $O(N^2)$ algorithm to compute locally correct matchings under this metric.

\paragraph{Related work}
The first algorithm to compute the \fd\ was given by Alt and Godau~\cite{Alt1995}.
They also consider a non-monotone \fd\ and their algorithm for this variant results in a locally correct non-monotone matching (see Remark 3.5 in~\cite{hr-fdre-12}).
Eiter and Mannila gave the first algorithm to compute the discrete \fd~\cite{em-cdfd-94}.
Since then, the \fd\ has received significant attention. Here we focus on approaches that restrict the allowed matchings.
Efrat~\etal~\cite{Efrat2002} introduced Fr\'echet-like metrics, the geodesic width and link width, to restrict to matchings suitable for curve morphing.
Their method is suitable only for non-intersecting polylines.
Moreover, geodesic width and link width do not resolve the problem illustrated in Fig.~\ref{fig:badmatch}: both matchings also have minimal geodesic width and minimal link width.
Maheshwari~\etal~\cite{Maheshwari2011} studied a restriction by ``speed limits'', which may exclude all \fm s and may cause undesirable effects near ``outliers'' (see Fig.~\ref{fig:speedcounter}).
Buchin~\etal~\cite{Buchin2010} describe a framework for restricting \fm s, which they illustrate by restricting slope and path length.
The former corresponds to speed limits.
We briefly discuss the latter at the end of Section~\ref{sec:algorithm}.

\section{Preliminaries}\label{sec:prelims}

{\sffamily\bfseries Curves.} Let $P$ be a polygonal curve with $m$ edges, defined by vertices $p_0, \ldots, p_m$. We treat a curve as a continuous map $P : [0,m] \rightarrow \mathbb{R}^d$. In this map, $P(i)$ equals $p_i$ for integer $i$. Furthermore, $P(i+\lambda)$ is a parameterization of
   the $(i+1)$st edge, that is,  $P(i+\lambda) = (1-\lambda) \cdot p_i + \lambda \cdot p_{i+1}$, for integer $i$ and $0 < \lambda < 1$.
As a reparametrization $\sigma : [0,1] \rightarrow [0,m]$ of a curve~$P$, we allow any continuous, non-decreasing function such that $\sigma(0) = 0$ and $\sigma(1) = m$. We denote by $P_\sigma(t)$ the actual location according to reparametrization $\sigma$: $P_\sigma(t) = P(\sigma(t))$. By $P_\sigma[a,b]$ we denote the subcurve of $P$ in between $P_\sigma(a)$ and $P_\sigma(b)$. In the following we are always given two polygonal curves $P$ and $Q$, where $Q$ is defined by its vertices $q_0, \ldots, q_n$ and is reparametrized by $\theta  : [0,1] \rightarrow [0,n]$. The reparametrized curve is denoted by $Q_\theta$.

\paragraph{\fm s}
We are given two polygonal curves $P$ and $Q$ with $m$ and $n$ edges. A (monotonous) \emph{matching} $\mu$ between $P$ and $Q$ is a pair of reparametrizations $(\sigma,\theta)$, such that $P_\sigma(t)$ matches to $Q_\theta(t)$.  The Euclidean distance between two matched points is denoted by $d_\mu(t) = |P_\sigma(t) - Q_\theta(t)|$. The maximum distance over a range is denoted by $d_\mu[a,b] = \max_{a \leq t \leq b} d_\mu(t)$. The \emph{\fd}\ between two curves is defined as $\delta_\text{F}(P,Q) = \inf_{\mu} d_\mu[0,1]$.
A \emph{\fm}\ is a matching $\mu$ that realizes the \fd: $d_\mu[0,1] = \delta_\text{F}(P,Q)$ holds.

\paragraph{Free space diagrams}
Alt and Godau~\cite{Alt1995} describe an algorithm to compute the \fd\ based on the decision variant (that is, solving $\delta_\text{F}(P,Q) \leq \varepsilon$ for some given $\varepsilon$).
Their algorithm uses a \emph{free space diagram}, a two-dimensional diagram on the range $[0, m] \times [0, n]$.
Every point $(x,y)$ in this diagram is either ``free'' (white) or not (indicating whether $|P(x) - Q(y)| \leq \varepsilon$).
The diagram has $m$ columns and $n$ rows; every cell $(c,r)$ ($1 \leq c \leq m$ and $1 \leq r \leq n$) corresponds to the edges $p_{c-1}p_{c}$ and $q_{r-1}q_{r}$.
To compute the \fd, one finds the smallest $\varepsilon$ such that there exists an x- and y-monotone path from point $(0,0)$ to $(m, n)$ in free space.
\begin{figure}[b]
  \centering
  \includegraphics{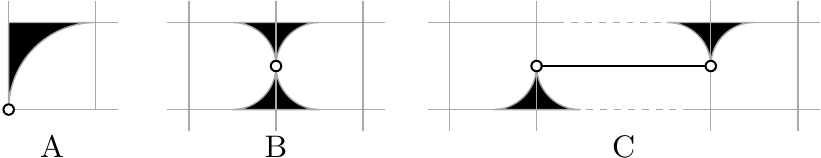}
  \shrinkfiguregap
  \caption{Three event types. (A) Endpoints come within range of each other. (B) Passage opens on cell boundary. (C) Passage opens in row (or column). We scale every row and column in the diagram to correspond to the (relative) length of the actual edge of the curve instead of using unit squares for cells.}
  \label{fig:events}
\end{figure}
For this, only certain \emph{critical values} for the distance have to be checked. Imagine continuously increasing the distance $\varepsilon$ starting at $\varepsilon = 0$. At so-called \emph{critical events}, which are illustrated in Fig.~\ref{fig:events}, passages open in the free space. 
The critical values are the distances corresponding to these events.

\section{Locally correct \fm s}\label{sec:lcm}

We introduce \emph{locally correct} \fm s, for which the matching between any two matched subcurves is a \fm.

\begin{definition}[Local correctness]\label{def:locallycorrect}
Given two polygonal curves $P$ and $Q$, a matching $\mu = (\sigma,\theta)$ is \emph{locally correct} if for all $a,b$ with $0 \leq a \leq b \leq 1$
\[ d_\mu[a,b] = \delta_\text{F}(P_\sigma[a,b], Q_\theta[a,b]). \]
\end{definition}
Note that not every \fm\ is locally correct. See for example Fig.~\ref{fig:speedcounter}.
The question arises whether a locally correct matching always exists and if so, how to compute it.
We resolve the first question in the following theorem.

\begin{theorem}\label{thm:main}
For any two polygonal curves $P$ and $Q$, there exists a locally correct \fm.
\end{theorem}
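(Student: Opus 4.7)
The plan is to construct $\mu$ by recursion on the combined edge count $N = m + n$, using critical events of the free space diagram as split points. Let $\varepsilon = \delta_\text{F}(P, Q)$. At level $\varepsilon$, the free space just barely admits a monotone path from $(0, 0)$ to $(m, n)$---no such path exists at any strictly smaller level, so a critical event of type~A, B, or C (cf.\ Fig.~\ref{fig:events}) opens the last needed passage. I would pick the corresponding event point $(x^*, y^*)$, which satisfies $|P(x^*) - Q(y^*)| = \varepsilon$ and lies on every \fm.

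Having chosen $(x^*, y^*)$, I split into subproblems $(P[0, x^*], Q[0, y^*])$ and $(P[x^*, m], Q[y^*, n])$, recursively produce locally correct matchings on each, and concatenate them to form the candidate $\mu$. A key auxiliary observation is that each subproblem has Fr\'echet distance exactly $\varepsilon$: the restriction of the existing \fm\ yields the upper bound $\leq \varepsilon$, and the lower bound $\geq \varepsilon$ is forced because the subproblem's endpoint pair is already at distance $\varepsilon$. As long as $(x^*, y^*)$ is not a corner of the diagram, each subproblem has strictly fewer edges than the original on at least one of $P$ and $Q$, so induction on $m + n$ applies.

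To verify local correctness of $\mu$, I would take an arbitrary $[a, b] \subseteq [0, 1]$. If $[a, b]$ lies inside one subproblem, the inductive hypothesis does the work. If $[a, b]$ straddles the split time $t^*$, then $d_\mu[a, b] = \max(d_\mu[a, t^*], d_\mu[t^*, b])$, and by induction each side equals the Fr\'echet distance of the corresponding matched subcurves. The remaining inequality $\delta_\text{F}(P_\sigma[a, b], Q_\theta[a, b]) \geq d_\mu[a, b]$ I plan to establish by examining the ``crossing times'' of any alternative matching $\nu$, namely when $\nu$ brings $P$ to $x^*$ and $Q$ to $y^*$: extending an initial portion of $\nu$ to a matching of the left-half subproblem---by letting the lagging curve wait at $x^*$ or $y^*$---shows that $\nu$'s max distance cannot drop below $d_\mu[a, b]$ without contradicting the sub-Fr\'echet distance known from the inductive hypothesis.

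The main obstacle will be the degenerate case where the bottleneck $(x^*, y^*)$ coincides with a corner of the diagram---for instance when $\varepsilon = |p_0 - q_0|$---so that one subproblem is trivial and the edge count does not strictly drop. I plan to address this through a secondary lexicographic induction on the pair (edge count, number of distinct critical distances $\leq \varepsilon$): matching the forced endpoint pair eliminates one critical value from the residual problem. Pinning down the crossing-time argument rigorously, with a case analysis of the three event types of Fig.~\ref{fig:events} to certify that the bottleneck's unavoidability is preserved inside sub-rectangles, is the other delicate technical point.
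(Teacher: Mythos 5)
Your high-level plan (induct on $m+n$, split at a critical event, concatenate recursive matchings) is the same as the paper's, but two steps that you flag as ``delicate'' are in fact where the real work lies, and as proposed they do not go through. First, the claim that the event point $(x^*,y^*)$ ``lies on every \fm'' is false in general: several critical events can occur at the same value $\varepsilon$, and different \fm s may use different ones, so a single event chosen as ``the last needed passage'' carries no necessity certificate. Your crossing-time argument does not repair this: when you truncate an alternative matching $\nu$ at the moment $P$ reaches $x^*$ and let the lagging curve ``wait'' until it reaches $y^*$, the inserted waiting segment matches $P(x^*)$ against points of $Q$ at distances that $\nu$ never realized, so its maximum is not bounded by $\nu$'s maximum and no lower bound on $\nu$ follows. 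Indeed, for an arbitrary matched pair at distance $\varepsilon$ the inequality $\delta_\text{F}(P,Q)\geq\max\{\delta_\text{F}(P_1,Q_1),\delta_\text{F}(P_2,Q_2)\}$ simply fails; the split point must be certified as unavoidable. The paper gets this certificate from a \emph{minimal realizing set}: if the subcurves between $a$ and $b$ admitted a matching below $\varepsilon_\text{r}$, then all events of the set between $a$ and $b$ could be bypassed, contradicting minimality. Some device of this kind (a minimality or uniqueness argument over the set of concurrent events, applied inside every sub-rectangle) is missing from your proposal.

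Second, the degenerate case where the bottleneck is a corner (a type-A event, e.g.\ $\varepsilon=|p_0-q_0|$) is not fixed by your secondary induction: after ``matching the forced endpoint pair'' the residual problem is the same pair of curves with the same free space diagram, so the number of distinct critical distances $\leq\varepsilon$ does not drop and the measure does not decrease. The paper's resolution is structurally different: it discards type-A events altogether, defines the splitting value $\varepsilon_\text{r}$ via connectivity from the \emph{boundary of cell} $(1,1)$ to the \emph{boundary of cell} $(m,n)$ (so $\varepsilon_\text{r}$ may be smaller than $\delta_\text{F}(P,Q)$), and proves a strengthened inductive statement (Lemma~\ref{lem:specmatching}): the constructed matching stays within $\varepsilon_\text{r}$ outside the two corner cells and is \emph{linear inside every cell}. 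That linearity is what handles intervals reaching into the corner cells, where $d_\mu$ may exceed $\varepsilon_\text{r}$: the maximum over such an interval is attained at the interval's endpoint, and an endpoint distance is always a lower bound on the \fd\ of the matched subcurves. Without some strengthening of this kind (control of the matching inside the first and last cells), your induction has no way to argue local correctness for intervals containing the curves' endpoints when the corner distance dominates.
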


\paragraph{Existence}
We prove Theorem~\ref{thm:main} by induction on the number of edges in the curves.
First, we present the lemmata for the two base cases: one of the two curves is a point, and both curves are line segments. In the following, $n$ and $m$ again denote the number of edges of $P$ and $Q$, respectively.

\begin{lemma}\label{lem:pointcurve}
For two polygonal curves $P$ and $Q$ with $m = 0$, a locally correct matching is $(\sigma, \theta)$, where $\sigma(t) = 0$ and $\theta(t) = t \cdot n$.
\end{lemma}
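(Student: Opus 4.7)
The plan is to unwind Definition~\ref{def:locallycorrect} directly, using the fact that when one of the curves degenerates to a point, any matching has the same cost, which coincides with the maximum point-to-curve distance.

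First I would observe that since $m = 0$, the curve $P$ consists of the single point $p_0$, and $\sigma \equiv 0$ forces $P_\sigma(t) = p_0$ for every $t \in [0,1]$. In particular, for any $0 \le a \le b \le 1$, the subcurve $P_\sigma[a,b]$ is just $\{p_0\}$. Next I would use that $\theta(t) = t \cdot n$ is continuous and strictly monotone, so $\theta([a,b]) = [an, bn]$ and hence the image of $Q_\theta$ on $[a,b]$ is exactly the point set of the subcurve $Q_\theta[a,b]$. This gives the identity
\[
d_\mu[a,b] \;=\; \max_{a \le t \le b} |p_0 - Q_\theta(t)| \;=\; \max_{q \in Q_\theta[a,b]} |p_0 - q|.
\]

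Then I would show that the right-hand side is also the Fr\'echet distance $\delta_\text{F}(P_\sigma[a,b], Q_\theta[a,b])$. For any monotone matching $\mu' = (\sigma', \theta')$ between the single point $p_0$ and $Q_\theta[a,b]$, the reparametrization of the point curve is necessarily constant (equal to $p_0$), and $\theta'$ must be a reparametrization whose image is all of $Q_\theta[a,b]$. Hence for every $q \in Q_\theta[a,b]$ there is a $t^* \in [0,1]$ with $Q_{\theta'}(t^*) = q$, giving $d_{\mu'}[0,1] \ge |p_0 - q|$; taking the maximum over $q$ yields $d_{\mu'}[0,1] \ge \max_{q \in Q_\theta[a,b]} |p_0 - q|$. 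Taking the infimum over all such $\mu'$ shows $\delta_\text{F}(P_\sigma[a,b], Q_\theta[a,b]) \ge \max_{q \in Q_\theta[a,b]} |p_0 - q|$, and the given matching $\mu$ (restricted and reparametrized to $[a,b]$) attains this bound, so equality holds.

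Combining the two identities gives $d_\mu[a,b] = \delta_\text{F}(P_\sigma[a,b], Q_\theta[a,b])$ for every $0 \le a \le b \le 1$, which is precisely local correctness. There is essentially no obstacle here; the only subtlety worth stating carefully is that $\theta$ is surjective onto $[0,n]$ (and onto $[an,bn]$ on the subinterval), so that the ``$\max$ over parameter values $t$'' in $d_\mu[a,b]$ really equals the ``$\max$ over points of the subcurve $Q_\theta[a,b]$''—everything else is a direct appeal to the definitions.
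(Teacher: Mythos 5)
Your proof is correct and takes essentially the same route as the paper: the paper simply invokes the fact that the Fr\'echet distance between a point and a curve equals the maximum point-to-curve distance, which is exactly $d_\mu[a,b]$, while you additionally verify that fact from the definition of a matching (surjectivity of reparametrizations). The only nitpick is that $\theta$ is not \emph{strictly} monotone when $n=0$, but your argument does not actually need strictness there.
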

\begin{proof}
Since $m = 0$, $P$ is just a single point, $p_0$. The \fd\ between a point and a curve is the maximal distance between the point and any point on the curve: $\delta_\text{F}(p_0, Q_\theta[a,b]) = d_\mu[a,b]$. This implies that the matching $\mu$ is locally correct.
\end{proof}

\begin{lemma}\label{lem:linesegments}
For two polygonal curves $P$ and $Q$ with $m = n = 1$, a locally correct matching is $(\sigma, \theta)$, where $\sigma(t) = \theta(t) = t$.
\end{lemma}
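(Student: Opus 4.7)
The plan is to exploit convexity of the pointwise distance function along the identity reparametrization and to verify that, on any subinterval, this distance function attains its maximum at the endpoints. Since the subcurves involved are themselves line segments, a short optimality argument for matching two segments then matches the expression for the subcurve Fr\'echet distance.

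More concretely, first I would observe that under $\sigma(t)=\theta(t)=t$ the matched points are $P(t)=(1-t)p_0+t p_1$ and $Q(t)=(1-t)q_0+t q_1$, so
\[
d_\mu(t) = |P(t)-Q(t)| = |(1-t)(p_0-q_0) + t(p_1-q_1)|,
\]
which is the norm of an affine function of $t$ and hence convex. As a consequence, on any subinterval $[a,b]\subseteq[0,1]$,
\[
d_\mu[a,b] = \max(d_\mu(a),d_\mu(b)) = \max(|P(a)-Q(a)|,|P(b)-Q(b)|).
\]

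Next I would establish the auxiliary fact that the Fr\'echet distance between two line segments equals the maximum of the Euclidean distances between their corresponding endpoints. The lower bound is immediate: any reparametrization pair sends $0\mapsto 0$ and $1\mapsto 1$, so $P(a)$ must match to $Q(a)$ and $P(b)$ to $Q(b)$ in the subcurve matching, giving $\delta_\text{F}\ge\max(|P(a)-Q(a)|,|P(b)-Q(b)|)$. The upper bound follows from the same convexity observation applied to the subcurves: the linear matching on the subsegments achieves exactly this maximum.

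Finally, since $P_\sigma[a,b]$ and $Q_\theta[a,b]$ are themselves line segments (from $P(a)$ to $P(b)$ and $Q(a)$ to $Q(b)$ respectively), applying the auxiliary fact yields
\[
\delta_\text{F}(P_\sigma[a,b],Q_\theta[a,b]) = \max(|P(a)-Q(a)|,|P(b)-Q(b)|) = d_\mu[a,b],
\]
which is exactly the condition of Definition~\ref{def:locallycorrect}. I do not expect a real obstacle here; the only subtle point is making sure the endpoint-matching constraint is invoked cleanly for arbitrary $a,b$ (rather than just for $a=0,b=1$), but this is automatic from the definition of a reparametrization of a subcurve.
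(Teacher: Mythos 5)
Your proof is correct and takes essentially the same route as the paper: both rest on a convexity argument giving $d_\mu[a,b]=\max\{d_\mu(a),d_\mu(b)\}$, combined with the endpoint lower bound on the subcurve Fr\'echet distance and the fact that the linear matching attains that value. The only cosmetic difference is that you argue convexity of $d_\mu$ directly as the norm of an affine function of $t$, whereas the paper deduces the same identity from convexity of the free space in the single cell.
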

\begin{proof}
The free space diagram of $P$ and $Q$ is a single cell and thus the free space is a convex area for any value of $\varepsilon$. Since $\mu = (\sigma, \theta)$ is linear, we have that $d_\mu[a,b] = \max\left\{ d_\mu(a), d_\mu(b) \right\}$: if there would be a $t$ with $a < t < b$ such that $d_\mu(t) > \max\left\{ d_\mu(a), d_\mu(b) \right\}$, then the free space at $\varepsilon = \max\left\{ d_\mu(a), d_\mu(b) \right\}$  would not be convex.
Since $d_\mu[a,b] = \max\left\{ d_\mu(a), d_\mu(b) \right\} \leq \delta_\text{F}(P_\sigma[a,b],Q_\theta[a,b]) \leq d_\mu[a,b]$, we conclude that $\mu$ is locally correct.
\end{proof}

\begin{figure}[t]
  \centering
  \includegraphics{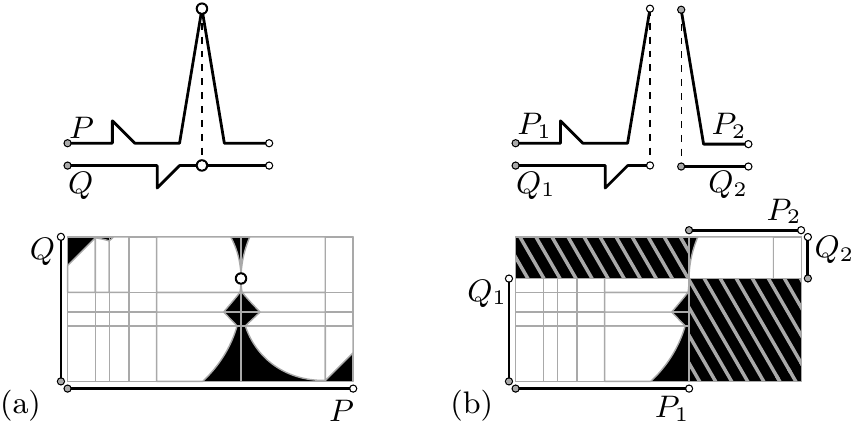}
  \shrinkfiguregap
  \caption{(a) Curves with the free space diagram for $\varepsilon = \delta_\text{F}(P,Q)$ and the realizing event. (b) The event splits each curve into two subcurves. The hatched areas indicate parts that disappear after the split.}
  \label{fig:split}
\end{figure}
For induction, we split the two curves based on events (see Fig.~\ref{fig:split}).
Since each split must reduce the problem size, we ignore any events on the left or bottom boundary of cell $(1,1)$ or on the right or top boundary of cell $(m,n)$. This excludes both events of type A. A free space diagram is \emph{connected} at value $\varepsilon$, if a monotonous path exists from the boundary of cell $(1,1)$ to the boundary of cell $(m,n)$. A \emph{realizing event} is a critical event at the minimal value $\varepsilon$ such that the corresponding free space diagram is connected.

Let $\mathcal{E}$ denote the set of concurrent realizing events for two curves.
A \emph{realizing set} $E_\text{r}$ is a subset of $\mathcal{E}$ such that the free space admits a monotonous path from cell $(1,1)$ to cell $(m,n)$ without using an event in $\mathcal{E} \backslash E_\text{r}$.
Note that a realizing set cannot be empty.
When $\mathcal{E}$ contains more than one realizing event, some may be ``insignificant'': they are never required to actually make a path in the free space diagram.
A realizing set is \emph{minimal} if it does not contain a strict subset that is a realizing set. Such a minimal realizing set contains only ``significant'' events.

\begin{lemma}\label{lem:realizingset}
For two polygonal curves $P$ and $Q$ with $m > 1$ and $n \geq 1$, there exists a minimal realizing set.
\end{lemma}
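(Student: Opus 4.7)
The plan is essentially a finiteness argument: show that the collection of realizing sets is nonempty and finite, then invoke the existence of a minimum-cardinality element. First I would confirm that the set $\mathcal{E}$ of concurrent realizing events is itself finite and nonempty. Finiteness follows because, at any fixed value of $\varepsilon$, only finitely many critical events of types B and C can occur in a free space diagram with $mn$ cells. Nonemptiness uses the hypothesis $m > 1$, $n \geq 1$: as $\varepsilon$ grows, the free space strictly expands, so the minimal $\varepsilon$ at which a monotone path from the boundary of cell $(1,1)$ to the boundary of cell $(m,n)$ exists is attained by some non-corner critical event (corner events of type A at $(0,0)$ and $(m,n)$ are excluded, but since $m>1$ the two cells are distinct and the connecting passage must open via a non-corner event).

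Next I would observe that $\mathcal{E}$ itself is trivially a realizing set: by the definition of realizing event, at the associated value $\varepsilon^*$ the diagram is connected, so a monotone path exists from cell $(1,1)$ to cell $(m,n)$, and since no events are excluded (as $\mathcal{E} \setminus \mathcal{E} = \emptyset$) this path vacuously satisfies the requirement. Hence the collection $\mathcal{S}$ of all realizing sets is nonempty. Moreover $\mathcal{S} \subseteq 2^{\mathcal{E}}$ and $\mathcal{E}$ is finite, so $\mathcal{S}$ itself is a finite, nonempty collection of subsets of $\mathcal{E}$.

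Finally I would pick $E_\text{r} \in \mathcal{S}$ of minimum cardinality. If $E_\text{r}$ were not minimal in the sense of the definition, there would exist a strict subset $E' \subsetneq E_\text{r}$ with $E' \in \mathcal{S}$, but then $|E'| < |E_\text{r}|$, contradicting the choice of $E_\text{r}$. So $E_\text{r}$ contains no strict realizing subset and is a minimal realizing set.

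The only step that requires any care is justifying that $\mathcal{E}$ genuinely qualifies as a realizing set once corner events are excluded, which is where the hypothesis $m>1$ enters; the remainder is the standard observation that a nonempty finite family ordered by inclusion admits a minimal element.
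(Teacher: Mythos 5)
Your proposal is correct and follows essentially the same route as the paper: observe that $\mathcal{E}$ is nonempty and is itself a realizing set, then extract a minimal one by finiteness of $\mathcal{E}$. The paper states this more tersely (leaving finiteness and the attainment of the minimal critical value implicit), while you spell out those details; the mathematical content is the same.
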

\begin{proof}
Let $\mathcal{E}$ denote the non-empty set of concurrent events at the minimal critical value. By definition, the empty set cannot be a realizing set and $\mathcal{E}$ is a realizing set. Hence, $\mathcal{E}$ contains a minimal realizing set.
\end{proof}
The following lemma directly implies that a locally correct \fm\ always exists.
Informally, it states that curves have a locally correct matching that is ``closer''  (except in cell $(1,1)$ or $(m,n)$) than the distance of their realizing set. Further, this matching is linear inside every cell.
In the remainder, we use realizing set to indicate a minimal realizing set, unless indicated otherwise.
\begin{lemma}\label{lem:specmatching}
If the free space diagram of two polygonal curves $P$ and $Q$ is connected at value $\varepsilon$, then there exists a locally correct \fm\ $\mu = (\sigma,\theta)$ such that $d_\mu(t) \leq \varepsilon$ for all $t$ with $\sigma(t) \geq 1$ or $ \theta(t) \geq 1$, and $\sigma(t) \leq m-1$ or $\theta(t) \leq n -1$. Furthermore, $\mu$ is linear in every cell.
\end{lemma}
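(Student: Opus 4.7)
The plan is to prove Lemma~\ref{lem:specmatching} by induction on $m+n$. The base cases $m=0$ (or $n=0$) and $m=n=1$ are handled by Lemmas~\ref{lem:pointcurve} and~\ref{lem:linesegments}: the explicit matchings there are linear in every cell of the (degenerate or single-cell) free space diagram, and the distance bound holds because $d_\mu(t)\leq\delta_\text{F}(P,Q)\leq\varepsilon$ globally in those cases.

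For the inductive step, assume $m,n\geq 1$ with $m+n\geq 3$, and set $\varepsilon^\ast=\delta_\text{F}(P,Q)\leq\varepsilon$. Using Lemma~\ref{lem:realizingset}, pick an event $e$ from a minimal realizing set at $\varepsilon^\ast$, located at a free-space point $(x^\ast,y^\ast)$ inside some cell $C^\ast$ that the event-exclusion rules keep away from cell $(1,1)$ and from the top-right corner of cell $(m,n)$. Splitting both curves at $(x^\ast,y^\ast)$ yields two subproblems $(P_i,Q_i)$ with strictly fewer total edges, and restricting a monotone path through $(x^\ast,y^\ast)$ at $\varepsilon^\ast$ shows that each subproblem's free space is connected at $\varepsilon^\ast$. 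The inductive hypothesis provides locally correct \fm s $\mu_1,\mu_2$, linear in every cell, with $d_{\mu_i}(t)\leq\varepsilon^\ast$ outside the corresponding subproblem's first and last cells. I define $\mu$ by concatenating $\mu_1$ and $\mu_2$ at $(x^\ast,y^\ast)$ and, inside $C^\ast$, replacing the two linear pieces meeting at the event with the single straight segment from $\mu_1$'s entry into $C^\ast$ to $\mu_2$'s exit from $C^\ast$.

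To verify the four required properties, first $\mu$ realizes the \fd\ since $\max\{d_{\mu_1},d_{\mu_2}\}\leq\varepsilon^\ast=\delta_\text{F}(P,Q)$ and $\mu$ remains monotone. Linearity in each cell is immediate outside $C^\ast$ and holds in $C^\ast$ by the rerouting. Inside any single cell the free space at any value is convex (a clipped ellipse), so the rerouted segment lies in the free space at $\varepsilon^\ast$ because both of its endpoints do; this simultaneously yields $d_\mu(t)\leq\varepsilon^\ast\leq\varepsilon$ inside $C^\ast$ (a non-corner cell of $(P,Q)$ by construction), while the inductive distance bound covers $t$ mapped elsewhere outside the corner cells of $(P,Q)$.

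The main obstacle is local correctness. When the image of $[a,b]$ lies entirely within one subproblem and avoids the interior of $C^\ast$, local correctness is inherited from $\mu_1$ or $\mu_2$; when the image lies entirely inside $C^\ast$ the argument of Lemma~\ref{lem:linesegments} applies verbatim to a linear matching in a convex cell. The delicate case is when the image of $[a,b]$ straddles the boundary of $C^\ast$. The plan is to combine inductive local correctness on the outside-$C^\ast$ portions with the single-cell convexity argument inside $C^\ast$: by convexity of $d_\mu$ along the rerouted linear segment, its maximum over the $C^\ast$-portion is attained at an endpoint of that portion, and these endpoint distances already match the Fr\'echet distance of the corresponding outside-$C^\ast$ subcurves via the inductive hypothesis. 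Verifying that no alternative monotone matching of the subcurves can improve on these boundary values—an extension of the Lemma~\ref{lem:linesegments} convexity argument across cell boundaries—is the most technically intricate step.
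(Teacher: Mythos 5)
Your overall plan (induction on $m+n$, split at a realizing event, concatenate inductively obtained matchings) matches the paper, but the step you defer as ``the most technically intricate'' is precisely where the paper's proof does its real work, and your proposed way of closing it would fail. For an interval $[a,b]$ that spans the splitting event, the danger is that the subcurves $P_\sigma[a,b],Q_\theta[a,b]$ admit a matching that avoids the event entirely and stays strictly below the event value; convexity arguments cannot rule this out, because the free space is convex only within a single cell and the competing matching may wander through many cells. The paper's resolution is combinatorial, not geometric: it splits at \emph{all} events of a \emph{minimal} realizing set $E_\text{r}$ at value $\varepsilon_\text{r}$, and argues that if $d_\mu[a,b]=\varepsilon_\text{r}$ while $\delta_\text{F}(P_\sigma[a,b],Q_\theta[a,b])<\varepsilon_\text{r}$, then $(\sigma(a),\theta(a))$ and $(\sigma(b),\theta(b))$ would be connected in free space below $\varepsilon_\text{r}$, so all events of $E_\text{r}$ between $a$ and $b$ could be omitted, contradicting minimality; the remaining case $d_\mu[a,b]>\varepsilon_\text{r}$ is confined to cells $(1,1)$ and $(m,n)$, where linearity of $\mu$ forces $d_\mu[a,b]=\max\{d_\mu(a),d_\mu(b)\}$, a lower bound on any matching. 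You select an event from a minimal realizing set but never invoke minimality, and with a single-event split the minimality argument does not directly apply (splitting one event at a time is the algorithm's refinement in Section~\ref{sec:algorithm}, where consistency across recursion levels needs a separate argument).

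Two further inaccuracies: setting $\varepsilon^\ast=\delta_\text{F}(P,Q)\leq\varepsilon$ is unjustified --- the lemma's hypothesis is connectivity from the boundary of cell $(1,1)$ to the boundary of cell $(m,n)$, which ignores the type~A (endpoint) events and can hold at values strictly below $\delta_\text{F}(P,Q)$; the induction must be run at the realizing value $\varepsilon_\text{r}\leq\varepsilon$, and the fact that $\mu$ is a \fm\ then follows from local correctness applied to $[0,1]$, not from a global bound $d_\mu\leq\delta_\text{F}(P,Q)$. Also, realizing events do not sit at a point inside a cell: a type~B event is a point on a cell boundary and a type~C event contributes a whole horizontal or vertical matching segment along a row or column, so no rerouting inside a cell $C^\ast$ is needed (the concatenation is automatically linear in every cell), and your rerouting as described does not correctly model type~C events.
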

\begin{proof}
We prove this by induction on $m + n$. The base cases ($m = 0$, $n = 0$, and $m = n = 1$) follow from Lemma~\ref{lem:pointcurve} and Lemma~\ref{lem:linesegments}.

For induction, we assume that $m \geq 1$, $n \geq 1$, and $m + n > 2$.
By Lemma~\ref{lem:realizingset}, a realizing set $E_\text{r}$ exists for $P$ and $Q$, say at value $\varepsilon_\text{r}$.
The set contains realizing events $e_1, \ldots, e_k$ ($k \geq 1$), numbered in lexicographic order.
By definition, $\varepsilon_\text{r} \leq \varepsilon$ holds.
Suppose that $E_\text{r}$ splits curve $P$ into $P_1, \ldots, P_{k+1}$ and curve $Q$ into $Q_1, \ldots, Q_{k+1}$, where $P_i$ has $m_i$ edges, $Q_i$ has $n_i$ edges.
By definition of a realizing event, none of the events in $E_\text{r}$ occur on the right or top boundary of cell $(m,n)$.
Hence, for any $i$ ($1 \leq i \leq k+1$), it holds that $m_i \leq m$, $n_i \leq n$, and $m_i < m$ or $n_i < n$.
Since a path exists in the free space diagram at $\varepsilon_\text{r}$ through all events in $E_\text{r}$, the induction hypothesis implies that, for any $i$ ($1 \leq i \leq k+1$), a locally correct matching $\mu_i = (\sigma_i,\theta_i)$ exists for $P_i$ and $Q_i$ such that $\mu_i$ is linear in every cell and $d_{\mu_i}(t) \leq \varepsilon_\text{r}$ for all $t$ with $\sigma_i(t) \geq 1$ or $\theta_i(t) \geq 1$, and $\sigma_i(t) \leq m_i-1$ or $\theta_i(t) \leq n_i -1$.
Combining these matchings with the events in $E_\text{r}$ yields a matching $\mu = (\sigma,\theta)$ for $(P,Q)$.

As we argue below, this matching is locally correct and satisfies the additional properties.
The matching of an event corresponds to a single point (type B) or a horizontal or vertical line (type C).
By induction, $\mu_i$ is linear in every cell.
Since all events occur on cell boundaries, the cells of the matchings and events are disjoint.
Therefore, the matching $\mu$ is also linear inside every cell.

For $i < k+1$, $d_{\mu_i}$ is at most $\varepsilon_\text{r}$ at the point where $\mu_i$ enters cell $(m_i,n_i)$ in the free space diagram of $P_i$ and $Q_i$. We also know that $d_{\mu_i}$ equals $\varepsilon_\text{r}$ at the top right corner of cell $(m_i,n_i)$. Since $\mu_i$ is linear inside the cell, $d_{\mu_i}(t) \leq \varepsilon_\text{r}$ also holds for $t$ with $\sigma_i(t) > m_i-1$ and $\theta_i(t) > n_i -1$. Analogously, for $i > 0$, $d_{\mu_i}(t)$ is at most $\varepsilon_\text{r}$ for $t$ with $\sigma_i(t) < 1$ and $\theta_i(t) < 1$. Hence, $d_\mu(t) \leq \varepsilon_\text{r} \leq \varepsilon$ holds for $t$ with $\sigma(t) \geq 1$ or $\theta(t) \geq 1$, and $\sigma(t) \leq m-1$ or $\theta(t) \leq n -1$.

To show that $\mu$ is locally correct, suppose for contradiction that values $a,b$ exist such that $\delta_\text{F}(P_\sigma[a,b], Q_\theta[a,b]) < d_\mu[a,b]$. If $a,b$ are in between two consecutive events, we know that the submatching corresponds to one of the matchings $\mu_i$. Since these are locally correct, $\delta_\text{F}(P_\sigma[a,b], Q_\theta[a,b]) = d_\mu[a,b]$ must hold.

Hence, suppose that $a$ and $b$ are separated by at least one event of $E_\text{r}$.
There are two possibilities: either $d_\mu[a,b] = \varepsilon_\text{r}$ or $d_\mu[a,b] > \varepsilon_\text{r}$.
$d_\mu[a,b] < \varepsilon_\text{r}$ cannot hold, since $d_\mu[a,b]$ includes a realizing event.
First, assume $d_\mu[a,b] = \varepsilon_\text{r}$ holds.
If $\delta_\text{F}(P_\sigma[a,b], Q_\theta[a,b]) < \varepsilon_\text{r}$ holds, then a matching exists that does not use the events between $a$ and $b$ and has a lower maximum.
Hence, the free space connects point $(\sigma(a),\theta(a))$ with point $(\sigma(b),\theta(b))$ at a lower value than $\varepsilon_\text{r}$. This implies that all events between $a$ and $b$ can be omitted, contradicting that $E_\text{r}$ is a minimal realizing set.

Now, assume $d_\mu[a,b] > \varepsilon_\text{r}$. Let $t'$ denote the highest $t$ for which $\sigma(t) \leq 1$ and $\theta(t) \leq 1$ holds, that is, the point at which the matching exits cell $(1,1)$.
Similarly, let $t''$ denote the lowest $t$ for which $\sigma(t) \geq m-1$ and $\theta(t) \geq n-1$ holds. Since $d_\mu(t) \leq \varepsilon_\text{r}$ holds for any $t' \leq t \leq t''$, $d_\mu(t) > \varepsilon_\text{r}$ can hold only for $t < t'$ or $t > t''$. Suppose that $d_\mu(a) > \varepsilon_\text{r}$ holds. Then $a < t'$ holds and $\mu$ is linear between $a$ and $t'$. Therefore, $d_\mu(a) > d_\mu(t)$ holds for any $t$ with $a < t < t'$.
Analogously, if $d_\mu(b) > \varepsilon_\text{r}$ holds, then $d_\mu(b) > d_\mu(t)$ holds for any $t$ with $t'' < t < b$ . Hence, $d_\mu[a,b] = \max \left\{ d_\mu(a), d_\mu(b) \right\}$ must hold.

This maximum is a lower bound on the \fd, contradicting the assumption that $d_\mu[a,b]$ is larger than the \fd. Matching $\mu$ is therefore locally correct.
\end{proof}

\section{Algorithm for locally correct \fm s}\label{sec:algorithm}

The existence proof directly results in a recursive algorithm, which is given by Algorithm~\ref{alg:findmatching}.
Fig.~\ref{fig:badmatch} (left), Fig.~\ref{fig:speedcounter} (left), Fig.~\ref{fig:examplematching}, Fig.~\ref{fig:vertexdependency}, and Fig.~\ref{fig:lengthcounter} (left) illustrate matchings computed with our algorithm. This section is devoted to proving the following theorem.
\begin{theorem}
Algorithm~\ref{alg:findmatching} computes a locally correct \fm\ of two polygonal curves $P$ and $Q$ with $m$ and $n$ edges in $O((m+n) m n \log (mn))$ time.
\end{theorem}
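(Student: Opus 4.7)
\smallskip\noindent\textbf{Proof sketch.} Correctness is immediate from Lemma~\ref{lem:specmatching}: the algorithm mirrors its inductive proof by computing a minimal realizing set, splitting the two subcurves at those events, recursing on each resulting piece, and concatenating the sub-matchings with the matchings forced by the events themselves, using Lemmas~\ref{lem:pointcurve} and \ref{lem:linesegments} as base cases. So the entire argument for running time.

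For the time bound, I first analyze a single recursive call on a subproblem of size $m' \times n'$. The dominant work is to find the realizing value $\varepsilon_\text{r}$, which is just the \fd\ of the subproblem; Alt and Godau's parametric-search algorithm computes this in $O(m' n' \log(m' n'))$ time. Walking the free space at $\varepsilon_\text{r}$ to extract a minimal realizing set and recording the linear piece of the matching inside each event cell both fit in $O(m' n')$ extra time, so one call costs $O(m' n' \log(mn))$.

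Next, I bound $\sum_{\text{calls}} m' n'$ by $O((m+n)\,mn)$ using two ingredients. First, every realizing event lies on a cell boundary (at least one coordinate is integer), so a split partitions the cells of the parent subdiagram into disjoint pieces---those to the left of a vertical cut versus those to the right, or those below a horizontal cut versus those above. Inductively, the subproblems at any fixed depth of the recursion tree occupy pairwise-disjoint cells of the original free space diagram and together cover at most $mn$ cells. Second, excluding events on the outer boundary of cells $(1,1)$ and $(m,n)$ forces $m' + n'$ to drop by at least one at every split, so the recursion depth is at most $m+n$. Multiplying gives $\sum m' n' \le (m+n)\,mn$, and hence total running time $O((m+n)\,mn \log(mn))$.

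The hardest part is the cell-disjointness claim when a realizing set contains several concurrent events that split the two curves into more than two pieces at once: one has to verify that the ``at least one integer coordinate'' structure of each event really forces the resulting sibling subproblems to be pairwise cell-disjoint, not merely pairwise separated along a single monotone path. The rest is routine bookkeeping.
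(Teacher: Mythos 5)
Your recursion accounting (pairwise cell-disjoint siblings per level, depth at most $m+n$, hence $\sum m'n' \le (m+n)mn$) is fine---the paper gets the same factor even more coarsely, charging each of the $O(m+n)$ recursive calls the full $O(mn\log(mn))$---but the proposal has a genuine gap exactly where the paper does its real work: you assert that, once $\varepsilon_\text{r}$ is known, ``walking the free space'' extracts a \emph{minimal} realizing set in $O(m'n')$ extra time. A monotone path through the free space at $\varepsilon_\text{r}$ gives you \emph{a} realizing set, but not a minimal one, and minimality (i.e., splitting only at ``significant'' events) is not a technicality: if you split at an event that no minimal realizing set needs, the concatenated matching attains $\varepsilon_\text{r}$ between two matched points whose subcurves have \fd\ strictly below $\varepsilon_\text{r}$, so local correctness fails---this is precisely the step in the proof of Lemma~\ref{lem:specmatching} that invokes minimality of $E_\text{r}$. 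Section~\ref{sec:algorithm} of the paper is devoted to this step: the $O(mn)$ concurrent events are ordered lexicographically, a binary search finds the first prefix $E_r$ that contains a realizing set, and each test is a modified Alt--Godau decision run in which events outside the prefix are blocked (implemented via entry points, after collapsing events ending on the same cell boundary); this produces one event $e_\text{r}$ guaranteed to lie in a minimal realizing set in $O(mn\log(mn))$ time, and a separate consistency argument (the fixed event order across recursion levels) ensures the events chosen in later calls form a realizing set together with $e_\text{r}$. Your proposal neither supplies such a procedure nor proves that the events on your walk are all significant, so both the correctness claim and the per-call time bound are unsupported as written.

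Two smaller inaccuracies. First, Algorithm~\ref{alg:findmatching} splits at a single event $e_\text{r}$ of a minimal realizing set, not at all events of such a set at once; your multiway-split variant would only be justified if you could actually compute a whole minimal realizing set, which is the missing piece, and for the two-way split the correctness additionally needs the consistency argument above rather than a direct appeal to Lemma~\ref{lem:specmatching}. Second, $\varepsilon_\text{r}$ is not the \fd\ of the subproblem: it is the smallest $\varepsilon$ at which the free space connects the boundary of cell $(1,1)$ to the boundary of cell $(m,n)$, with type~A events (and anything on the outer boundaries of those cells) excluded, and it can be strictly smaller than $\delta_\text{F}$ of the subcurves. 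The part you flag as hardest (cell-disjointness of siblings) is true but routine; the crux is finding a significant event efficiently.
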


\begin{algorithm}
  \caption{$\texttt{ComputeLCFM}(P, Q)$}\label{alg:findmatching}
  \begin{algorithmic}[1]

  \REQUIRE $P$ and $Q$ are curves with $m$ and $n$ edges
  \ENSURE A locally correct \fm\ for $P$ and $Q$

  \IF{$m = 0$ \textbf{or} $n = 0$}
    \RETURN $(\sigma, \theta)$ where $\sigma(t) = t \cdot m$, $\theta(t) = t \cdot n$
  \ELSIF{$m = n = 1$}
    \RETURN $(\sigma, \theta)$ where $\sigma(t) = \theta(t) = t$
  \ELSE
    \STATE Find event $e_\text{r}$ of a minimal realizing set \label{algline:find_er}
    \STATE Split $P$ into $P_1$ and $P_2$ according to $e_\text{r}$
    \STATE Split $Q$ into $Q_1$ and $Q_2$ according to $e_\text{r}$
    \STATE $\mu_1 \rightarrow \texttt{ComputeLCFM}(P_1, Q_1)$
    \STATE $\mu_2 \rightarrow \texttt{ComputeLCFM}(P_2, Q_2)$
    \RETURN concatenation of $\mu_1$, $e_\text{r}$, and $\mu_2$
  \ENDIF
 \end{algorithmic}
\end{algorithm}

Using the notation of Alt and Godau~\cite{Alt1995}, $L^F_{i,j}$ denotes the interval of free space on the left boundary of cell $(i,j)$; $L^R_{i,j}$ denotes the subset of $L^F_{i,j}$ that is reachable from point $(0,0)$ of the free space diagram with a monotonous path in the free space. Analogously, $B^F_{i,j}$ and $B^R_{i,j}$ are defined for the bottom boundary.

With a slight modification to the decision algorithm, we can compute the minimal value of $\varepsilon$ such that a path is available from cell $(1,1)$ to cell $(m,n)$. This requires only two changes: $B^R_{1,2}$ should be initialized with $B^F_{1,2}$ and $L^R_{2,1}$ with $L^F_{2,1}$; the answer should be ``yes'' if and only if $B^R_{m,n}$ or $L^R_{m,n}$ is non-empty.

\begin{figure}[b]
  \centering
  \includegraphics{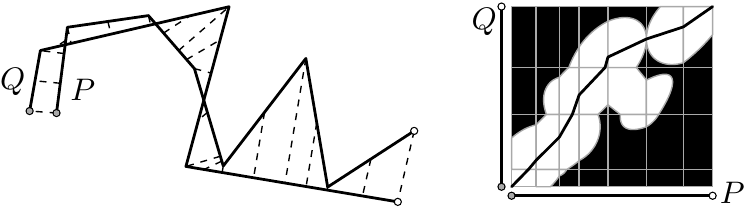}
  \shrinkfiguregap
  \caption{Locally correct matching produced by Algorithm~\ref{alg:findmatching}. Free space diagram drawn at $\varepsilon = \delta_\text{F}(P,Q)$.}
  \label{fig:examplematching}
\end{figure}

\paragraph{Realizing set}
By computing the \fd\ using the modified Alt and Godau algorithm, we obtain an ordered, potentially non-minimal realizing set $\mathcal{E} = \{ e_1, \ldots, e_l \}$.
The algorithm must find an event that is contained in a realizing set.
Let $E_{k}$ denote the first $k$ events of $\mathcal{E}$. For now we assume that the events in $\mathcal{E}$ end at different cell boundaries.
We use a binary search on $\mathcal{E}$ to find the $r$ such that $E_r$ contains a realizing set, but $E_{r-1}$ does not. This implies that event $e_r$ is contained in a realizing set and can be used to split the curves. Note that $r$ is unique due to monotonicity.
For correctness, the order of events in $\mathcal{E}$ must be consistent in different iterations, for example, by using a lexicographic order.
Set $E_r$ contains only realizing sets that use $e_r$.
Hence, $E_{r-1}$ contains a realizing set to connect cell $(1,1)$ to $e_r$ and $e_r$ to cell $(m,n)$.
Thus any event found in subsequent iterations is part of $E_{r-1}$ and of a realizing set with $e_r$.

To determine whether some $E_{k}$ contains a realizing set, we check whether cells $(1,1)$ and $(m,n)$ are connected without ``using'' the events of $\mathcal{E} \backslash E_{k}$.
To do this efficiently, we further modify the Alt and Godau algorithm.
We require only a method to prevent events in $\mathcal{E} \backslash E_{k}$ from being used.
After $L^R_{i,j}$ is computed, we check whether the event $e$ (if any) that ends at the left boundary of cell $(i,j)$ is part of $\mathcal{E} \backslash E_{k}$ and necessary to obtain $L^R_{i,j}$.
If this is the case, we replace $L^R_{i,j}$ with an empty interval.
Event $e$ is necessary if and only if $L^R_{i,j}$ is a singleton.
To obtain an algorithm that is numerically more stable, we introduce entry points.
The \emph{entry point} of the left boundary of cell $(i,j)$ is the maximal $i' < i$ such that $B^R_{i',j}$ is non-empty.
These values are easily computed during the decision algorithm.
Assume the passage corresponding to event $e$ starts on the left boundary of cell $(i_\text{s}, j)$.
Event $e$ is necessary to obtain $L_{i,j}^R$ if and only if $i' < i_\text{s}$.
Therefore, we use the entry point instead of checking whether $L^R_{i,j}$ is a singleton.
This process is analogous for horizontal boundaries of cells.

Earlier we assumed that each event in $\mathcal{E}$ ends at a different cell boundary.
If events end at the same boundary, then these occur in the same row (or column) and it suffices to consider only the event that starts at the rightmost column (or highest row).
This justifies the assumption and ensures that $\mathcal{E}$ contains $O(m n)$ events.
Thus computing $e_\text{r}$ (Algorithm~\ref{alg:findmatching}, line~\ref{algline:find_er}) takes $O(m  n  \log(m  n))$ time, which is equal to the time needed to compute the \fd. Each recursion step splits the problem into two smaller problems, and the recursion ends when $mn \leq 1$. This results in an additional factor $m+n$. Thus the overall running time is $O((m+n) m  n  \log(m  n))$.

\begin{figure}[b]
  \centering
  \includegraphics{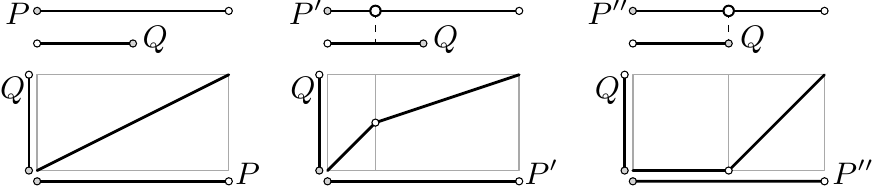}
  \caption{Different sampling may result in different matchings.}
  \label{fig:vertexdependency}
\end{figure}
\paragraph{Sampling and further restrictions}
Two curves may still have many locally correct \fm s: the algorithm computes just one of these.
However, introducing extra vertices may alter the result, even if these vertices do not modify the shape (see Fig.~\ref{fig:vertexdependency}).
This implies that the algorithm depends not only on the shape of the curves, but also on the sampling.
Increasing the sampling further and further seems to result in a matching that decreases the matched distance as much as possible within a cell.
However, since cells are rectangles, there is a slight preference for taking longer diagonal paths.
Based on this idea, we are currently investigating ``locally optimal'' \fm s.
The idea is to restrict to the locally correct \fm\ that decreases the matched distance as quickly as possible.

We also considered restricting to the ``shortest'' locally correct \fm, where ``short'' refers to the length of the path in the free space diagram.
However, Fig.~\ref{fig:lengthcounter} shows that such a restriction does not necessarily improve the quality of the matching.

\begin{figure}[t]
  \centering
  \includegraphics{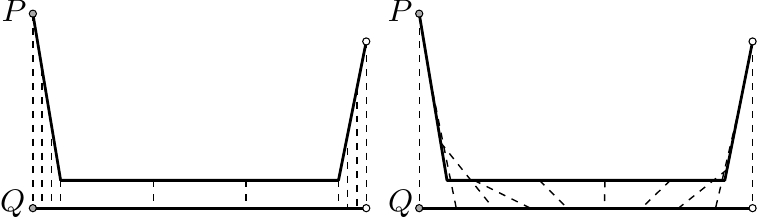}
  \shrinkfiguregap
  \caption{Two locally correct Fr\'echet matchings for $P$ and $Q$. Right: shortest matching.}
  \label{fig:lengthcounter}
\end{figure}

\section{Locally correct discrete \fm s}\label{sec:discrete}

Here we study the discrete variant of \fm s.
For the discrete \fd, only the vertices of curves are matched.
The discrete \fd\ can be computed in $O(m \cdot n)$ time via dynamic programming~\cite{em-cdfd-94}.
Here, we show how to also compute a locally correct discrete \fm\ in $O(m \cdot n)$ time.

\paragraph{Grids}
Since we are interested only in matching vertices of the curves, we can convert the problem to a grid problem.
Suppose we have two curves $P$ and $Q$ with $m$ and $n$ edges respectively.
These convert into a grid $G$ of non-negative values with $m+1$ columns and $n+1$ rows.
Every column corresponds to a vertex of $P$, every row to a vertex of $Q$.
Any node of the grid $G[i,j]$ corresponds to the pair of vertices $(p_i, q_j)$.
Its value is the distance between the vertices: $G[i,j] = |p_i - q_j|$.
Analogous to free space diagrams, we assume that $G[0,0]$ is the bottomleft node and $G[m,n]$ the topright node.

\paragraph{Matchings}
A monotonous path $\pi$ is a sequence of grid nodes $\pi(1), \ldots, \pi(k)$ such that every node $\pi(i)$ ($1 < i \leq k$) is the above, right, or above/right diagonal neighbor of $\pi(i-1)$.
In the remainder of this section a path refers to a monotonous path unless indicated otherwise.
A monotonous discrete matching of the curves corresponds to a path $\pi$ such that $\pi(1) = G[0,0]$ and $\pi(k) = G[m,n]$.
We call a path $\pi$ locally correct if for all $1 \leq t_1 \leq t_2 \leq k$, $\max_{t_1 \leq t \leq t_2} \pi(t) = \min_{\pi'} \max_{1 \leq t \leq k'} \pi'(t)$, where $\pi'$ ranges over all paths starting at $\pi'(1) = \pi(t_1)$ and ending at $\pi'(k') = \pi(t_2)$.

\begin{algorithm}[t]
  \caption{$\texttt{ComputeDiscreteLCFM}(P, Q)$}\label{alg:finddiscrete}
  \begin{algorithmic}[1]

  \REQUIRE $P$ and $Q$ are curves with $m$ and $n$ edges
  \ENSURE A locally correct discrete \fm\ for $P$ and $Q$

  \STATE Construct grid $G$ for $P$ and $Q$
  \STATE Let $T$ be a tree consisting only of the root $G[0,0]$
  \FOR{$i \leftarrow 1$ \textbf{to} $m$}
    \STATE Add $G[i,0]$ to $T$
  \ENDFOR
  \FOR{$j \leftarrow 1$ \textbf{to} $n$}
    \STATE Add $G[0,j]$ to $T$
  \ENDFOR
  \FOR{$i \leftarrow 1$ \textbf{to} $m$}
    \FOR{$j \leftarrow 1$ \textbf{to} $n$}
        \STATE $\texttt{AddToTree}(T, G, i, j)$
    \ENDFOR
  \ENDFOR
  \RETURN path in $T$ between $G[0,0]$ and $G[m,n]$
 \end{algorithmic}
\end{algorithm}

\paragraph{Algorithm}
The algorithm needs to compute a locally correct path between $G[0,0]$ and $G[m,n]$ in a grid $G$ of non-negative values.
To this end, the algorithm incrementally constructs a tree $T$ on the grid such that each path in $T$ is locally correct.
The algorithm is summarized by Algorithm~\ref{alg:finddiscrete}.
We define a \emph{growth node} as a node of $T$ that has a neighbor in the grid that is not yet part of $T$: a new branch may sprout from such a node.
The growth nodes form a sequence of horizontally or vertically neighboring nodes.
A \emph{living node} is a node of $T$ that is not a growth node but is an ancestor of a growth node.
A \emph{dead node} is a node of $T$ that is neither a living nor a growth node, that is, it has no descendant that is a growth node.
Every pair of nodes in this tree has a \emph{nearest common ancestor} (NCA).
When we have to decide what parent to use for a new node in the tree, we look at the maximum value on the path in the tree between the parents and their NCA (excluding the value of the latter).
A \emph{face} of the tree is the area enclosed by the segment between two horizontally or vertically neighboring growth nodes (without one being the parent of another) and the paths to their NCA.
The unique \emph{sink} of a face is the node of the grid that is in the lowest column and row of all nodes on the face.
Fig.~\ref{fig:sinksandshortcuts} (a-b) shows some examples of faces and their sinks.

\begin{figure}[b]
\centering
\includegraphics{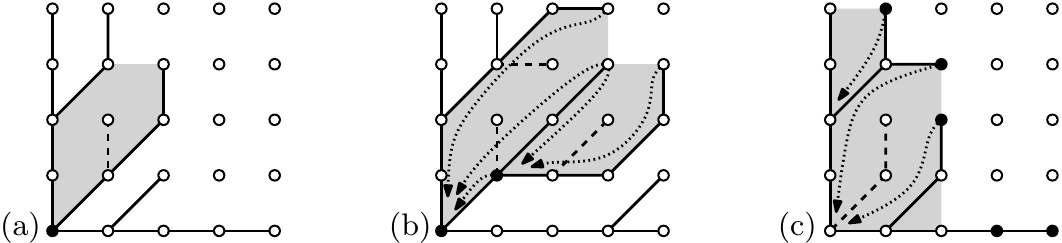}
\shrinkfiguregap
\caption{(a) Face of tree (gray area) with its unique sink (solid dot). A dashed line represents a dead path. (b) Two adjacent faces with some shortcuts indicated. (c) Tree with 3 faces. Solid dots indicate growth nodes with a growth node as parent. These nodes are incident to at most one face. All shortcuts of these nodes are indicated.}
\label{fig:sinksandshortcuts}
\end{figure}

\paragraph{Shortcuts}
To avoid repeatedly walking along the tree to compute maxima, we maintain up to two \emph{shortcuts} from every node in the tree.
The segment between the node and its parent is incident to up to two faces of the tree.
The node maintains shortcuts to the sink of these faces, associating the maximum value encountered on the path between the node and the sink (excluding the value of the sink).
Fig.~\ref{fig:sinksandshortcuts} (b) illustrates some shortcuts.
With these shortcuts, it is possible to determine the maximum up to the NCA of two (potentially diagonally) neighboring growth nodes in constant time.

Note that a node $g$ of the tree that has a growth node as parent is incident to at most one face (see Fig.~\ref{fig:sinksandshortcuts} (c)).
We need the ``other'' shortcut only when the parent of $g$ has a living parent.
Therefore, the value of this shortcut can be obtained in constant time by using the shortcut of the parent.
When the parent of $g$ is no longer a growth node, then $g$ obtains its own shortcut.

\paragraph{Extending the tree}
Algorithm~\ref{alg:extend} summarizes the steps required to extend the tree $T$ with a new node.
Node $G[i,j]$ has three \emph{candidate parents}, $G[i-1, j]$, $G[i-1, j-1]$, and $ G[i, j-1]$.
Each pair of these candidates has an NCA.
For the actual parent of $G[i,j]$, we select the candidate $c$ such that for any other candidate $c'$, the maximum value from $c$ to their NCA is at most the maximum value from $c'$ to their NCA---both excluding the NCA itself.
We must be consistent when breaking ties between candidate parents.
To this end, we use the preference order of $G[i-1, j] \succ G[i-1, j-1] \succ G[i, j-1]$.
Since paths in the tree cannot cross, this order is consistent between two paths at different stages of the algorithm.
Note that a preference order that prefers $G[i-1,j-1]$ over both other candidates or vice versa results in an incorrect algorithm.

\begin{algorithm}[t]
  \caption{$\texttt{AddToTree}(T, G, i, j)$}\label{alg:extend}
  \begin{algorithmic}[1]

  \REQUIRE $G$ is a grid of non-negative values; any path in tree $T$ is locally correct
  \ENSURE node $G[i,j]$ is added to $T$ and any path in $T$ is locally correct

  \STATE $parent(G[i, j]) \leftarrow$ candidate parent with lowest maximum value to NCA

  \IF{$G[i-1, j-1]$ is dead}
    \STATE Remove the dead path ending at $G[i-1, j-1]$ and extend shortcuts
  \ENDIF

  \STATE Make shortcuts for $G[i-1, j]$, $G[i, j-1]$, and $G[i, j]$ where necessary
 \end{algorithmic}
\end{algorithm}

\begin{figure}[b]
\centering
\includegraphics{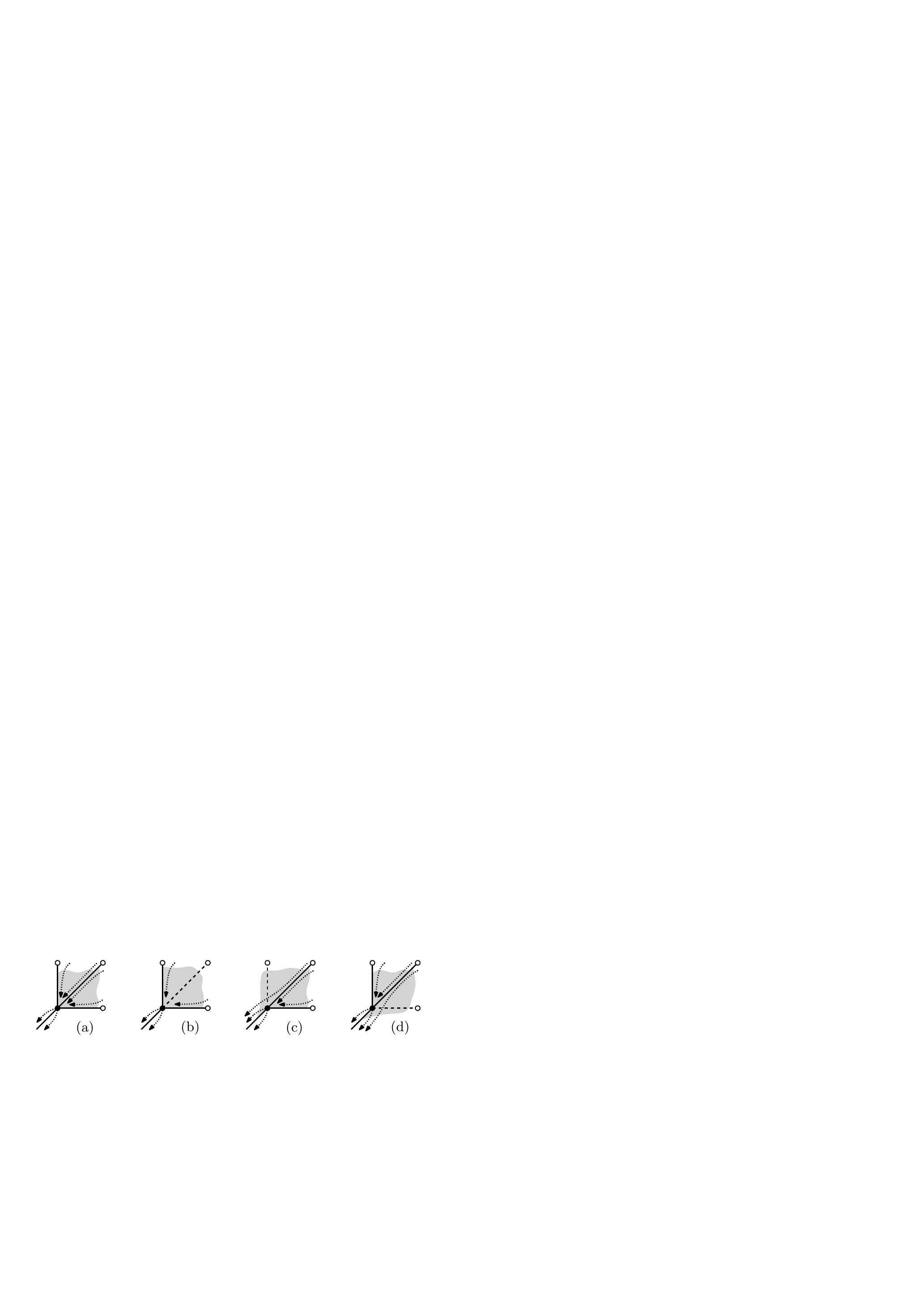}
\shrinkfiguregap
\caption{(a) Each sink has up to four sets of shortcuts. (b-d) Removing a dead path (dashed) extends at most one set of shortcuts.}
\label{fig:fourshortcutsets}
\end{figure}

When a dead path is removed from the tree, adjacent faces merge and a sink may change.
Hence, shortcuts have to be extended to point toward the new sink.
Fig.~\ref{fig:fourshortcutsets} illustrates the incoming shortcuts at a sink and the effect of removing a dead path on the incoming shortcuts.
Note that the algorithm does not need to remove dead paths that end in the highest row or rightmost column.

Finally, $G[i-1, j]$, $G[i, j-1]$, and $G[i, j]$ receive shortcuts where necessary. $G[i-1, j]$ or $G[i, j-1]$ needs a shortcut only if its parent is $G[i-1, j-1]$.
$G[i, j]$ needs two shortcuts if $G[i-1,j-1]$ is its parent, only one shortcut otherwise.

\paragraph{Correctness}
To prove correctness of Algorithm~\ref{alg:finddiscrete}, we require a stronger version of local correctness.
A path $\pi$ is \emph{strongly locally correct} if for all paths $\pi'$ with the same endpoints $\max_{1 < t \leq k} \pi(t) \leq \max_{1 < t' \leq k'} \pi'(t')$ holds.
Note that the first node is excluded from the maximum.
Since $\max_{1 < t \leq k} \pi(t) \leq \max_{1 < t' \leq k'} \pi'(t')$ and $\pi(1) = \pi'(1)$ imply $\max_{1 \leq t \leq k} \pi(t) \leq \max_{1 \leq t' \leq k'} \pi'(t')$, a strongly locally correct path is also locally correct.
Lemma~\ref{lem:discretecorrect} implies the correctness of Algorithm~\ref{alg:finddiscrete}.

\begin{lemma}\label{lem:discretecorrect}
Algorithm~\ref{alg:finddiscrete} maintains the following invariant:
any path in $T$ is strongly locally correct.
\end{lemma}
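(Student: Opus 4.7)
The plan is to prove the invariant by induction on the number of interior nodes already added to $T$ by calls to \texttt{AddToTree}. The base case is the initial tree produced by the two boundary loops of Algorithm~\ref{alg:finddiscrete}, containing only $G[0,0]$ together with the row and column nodes; any ancestor--descendant pair on these boundaries admits exactly one monotonous grid path (since one coordinate is fixed throughout), so the tree path and its unique competitor coincide and strong local correctness holds trivially.

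For the inductive step, consider the call that selects parent $c$ for $G[i,j]$. Ancestor--descendant paths in $T$ that do not involve $G[i,j]$ are not modified and remain strongly locally correct by the inductive hypothesis, so it suffices to establish strong local correctness for every new path, i.e.\ a path from some $u$ that is either $c$ itself or an ancestor of $c$ in the old tree. Fix such a $u$ and any competing monotonous grid path $\rho = v_0 v_1 \cdots v_r$ with $v_0 = u$ and $v_r = G[i,j]$, and let $c' = v_{r-1} \in \{G[i-1,j], G[i-1,j-1], G[i,j-1]\}$. Since $v_r = G[i,j]$ appears both on the new tree path $u \to G[i,j]$ and on $\rho$, it suffices to bound the maximum of the tree path $u \to c$ excluding $u$ by $\max(v_1, \ldots, v_{r-1})$.

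I would then split into three cases. First, if $c' = c$, then the prefix of $\rho$ from $u$ to $c$ is a monotonous grid path $u \to c$, and the inductive hypothesis applied to the tree path $u \to c$ immediately gives the bound. Second, if $c' \neq c$ and $u$ is an ancestor of $c'$ in $T$, let $N = \mathrm{NCA}(c, c')$ in the old tree; then $u$ is also an ancestor of $N$, and the tree paths $u \to c$ and $u \to c'$ share the common segment $u \to N$. The parent selection rule bounds the maximum on the tree branch from $N$ to $c$ (excluding $N$) by the corresponding quantity toward $c'$, which lifts to $\max(\text{tree } u \to c \text{ excl } u) \le \max(\text{tree } u \to c' \text{ excl } u)$; combining with the inductive hypothesis on the tree path $u \to c'$ applied to the prefix of $\rho$ up to $c'$ closes this case.

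The hardest case, and the principal obstacle I foresee, is the third: $c' \neq c$, $u$ is an ancestor of $c$ in $T$ but not of $c'$, yet $u$ still lies below-left of $c'$ in the grid. Setting $w = \mathrm{NCA}(u, c')$ in $T$, the non-crossing property of monotonous tree paths forces $w$ to coincide with $\mathrm{NCA}(c, c')$, so the parent selection rule still yields $\max(\text{tree } w \to c \text{ excl } w) \le \max(\text{tree } w \to c' \text{ excl } w)$. Applying strong local correctness of the tree path $w \to c'$ to the concatenated monotonous grid path obtained by following the tree from $w$ to $u$ and then $\rho$ from $u$ to $c'$ yields $\max(M_{wu}, M_{uc}) \le \max(M_{wu}, M_{\rho'})$, where $M_{wu}$, $M_{uc}$, and $M_{\rho'}$ denote respectively the maxima of the tree path $w \to u$ excluding $w$, the tree path $u \to c$ excluding $u$, and $v_1, \ldots, v_{r-1}$. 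When $M_{wu} \le M_{\rho'}$ the required bound $M_{uc} \le M_{\rho'}$ is immediate; the delicate subcase $M_{wu} > M_{\rho'}$ is the crux, and here I expect to invoke strong local correctness of the tree path $u \to c$ (inductive hypothesis) together with a monotonous grid path from $u$ to $c$ derived from the portion of $\rho$ lying below the row (or column) that separates $c$ from $c'$, relying on the consistent preference order between iterations and on the correctness of the maintained shortcuts after dead-path removal to rule out any intermediate grid value that could exceed $\max(v_1, \ldots, v_r)$. Making this final subcase fully rigorous is the main obstacle.
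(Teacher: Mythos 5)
Your setup (induction over added nodes, reduction to paths ending at the new node $G[i,j]$, and the case split on the before-last node $c'$ of the competing path and on whether the start $u$ is an ancestor of $c'$) matches the paper's proof, and your first two cases and the strict-inequality part of the third case are sound: they correspond to the paper's cases (a), (b) and (c-1). However, the subcase you leave open is not a technicality but exactly the crux, and your sketch for it would not work as stated. When the two candidate parents have \emph{equal} maxima to their NCA (which is precisely when your troublesome situation $M_{wu} > M_{\rho'}$ can survive the chain of inequalities), strong local correctness of the existing tree paths is provably not a strong enough induction hypothesis: from $\max\{\pi_1,\pi_2\}\leq\pi_3$ one can only derive non-strict inequalities, which never contradict the invariant, and indeed the paper points out that an algorithm using a different tie-breaking rule (e.g.\ preferring $G[i-1,j-1]$) is incorrect, so no argument relying solely on ``every tree path is strongly locally correct'' can close this case. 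The paper resolves it by \emph{strengthening the invariant} to also record that all ties were broken according to the fixed preference order $G[i-1,j]\succ G[i-1,j-1]\succ G[i,j-1]$ (consistent across iterations because monotone tree paths cannot cross), and then, in its case (c-2), splitting the competing path $\pi_{\mathrm{i}}$ at its first vertex that is a tree ancestor of $c'$; depending on a comparison of the two resulting prefixes, the situation either reduces to your case 2 (an invalidating path starting at the NCA), or contradicts strong local correctness of a tree path, or exhibits two equal-value paths arriving in the ``wrong'' order, contradicting the tie-breaking clause of the invariant.

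Two further points. First, your proposed ingredient ``correctness of the maintained shortcuts after dead-path removal'' is irrelevant here: shortcuts only speed up the computation of the maxima to the NCA and play no role in which parent is selected, hence no role in correctness. Second, merely mentioning ``the consistent preference order between iterations'' is not enough; to use it you must build it into the induction invariant (as the paper does), since the contradiction in the tie case is with that clause rather than with strong local correctness. As it stands, the proposal has a genuine gap at the tie-handling subcase.
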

\begin{proof}
To prove this lemma, we strengthen the invariant.

\textbf{Invariant.}
We are given a tree $T$ such that every path in $T$ is strongly locally correct.
In constructing $T$, any ties were broken using the preference order.

\textbf{Initialization.}
Tree $T$ is initialized such that it contains two types of paths:
either between grid nodes in the first column or in the first row.
In both cases there is only one path between the endpoints of the path.
Therefore, this path must be strongly locally correct.
Since every node has only one candidate parent, $T$ adheres to the preference order.

\textbf{Maintenance.}
The algorithm extends $T$ to $T'$ by including node $g = G[i,j]$.
This is done by connecting $g$ to one of its candidate parents ($G[i-1,j]$, $G[i-1,j-1]$, or $G[i,j-1]$), the one that has the lowest maximum value along its path to the NCA.
We must now prove that any path in $T'$ is strongly locally correct.
From the induction hypothesis, we conclude that only paths that end at $g$ could falsify this statement.

Suppose that such an invalidating path exists in $T'$, ending at $g$.
This path must use one of the candidate parents of $g$ as its before-last node.
We distinguish three cases on how this path is situated compared to $T'$.
The last case, however, needs two subcases to deal with candidate parents that have the same maximum value on the path to their NCA.
The four cases are illustrated in Fig.~\ref{fig:nonLCcasesAppendix}.

For each case, we consider the path $\pi_\textrm{i}$ between the first vertex and the parent of $g$ in the invalidating path (i.e. one of the three candidate parents).
Note that $\pi_\textrm{i}$ need not be disjoint of the paths in $T'$.
Slightly abusing notation, we also use a path $\pi'$ to denote its maximum value, excluding the first node, i.e. $\max_{1 < t \leq k'} \pi'(t)$.
We now show that for each case, the existence of the invalidating path contradicts the invariant on $T$.

\begin{figure}[b]
\centering
\includegraphics{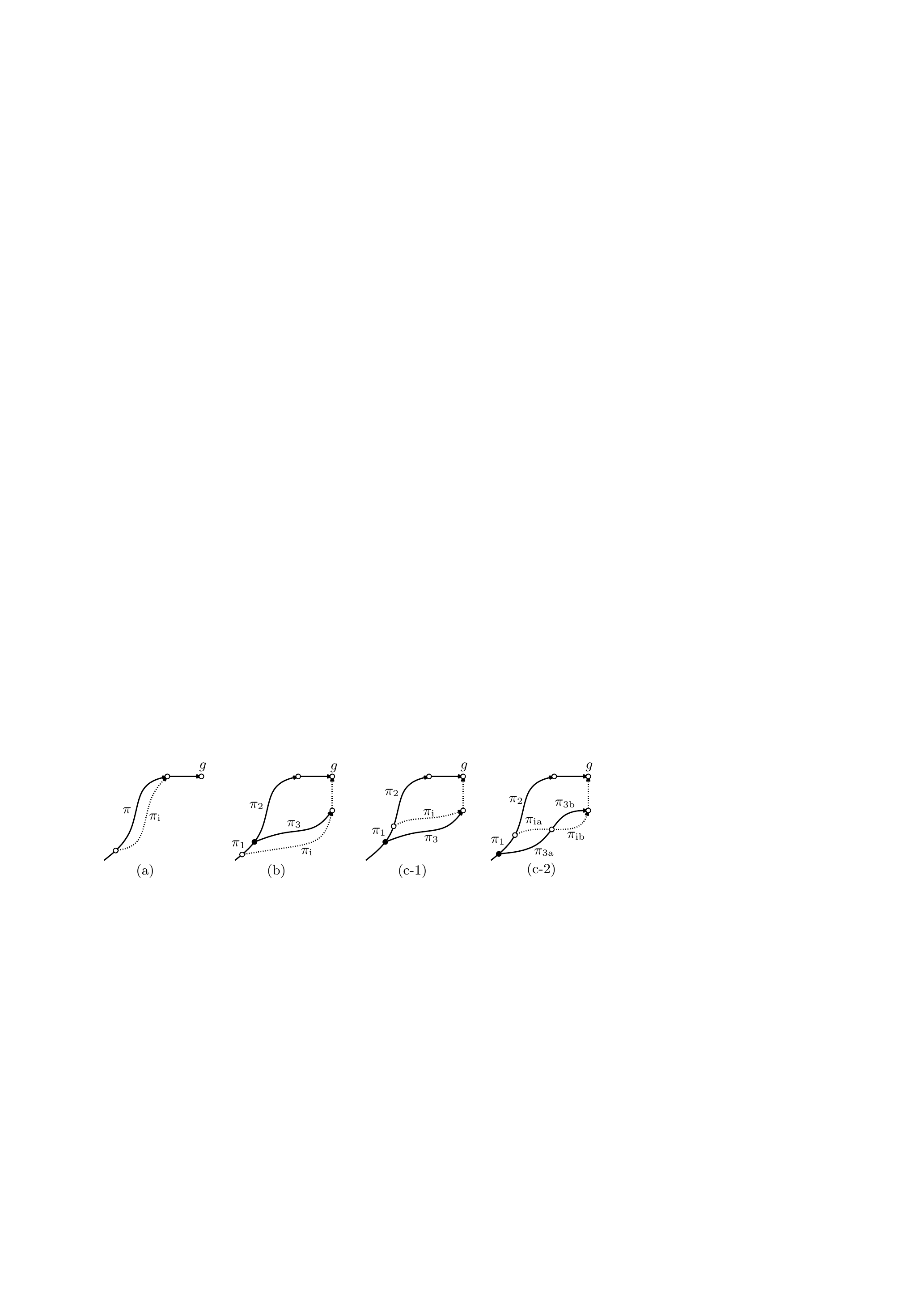}
\shrinkfiguregap
\caption{The four cases for the proof of Lemma~\ref{lem:discretecorrect}.}
\label{fig:nonLCcasesAppendix}
\end{figure}

\textit{Case (a).}
Path $\pi_\mathrm{i}$ ends at the parent of $g$ in $T'$.
Path $\pi$ is the path in $T'$ between the first and last vertex of $\pi_\mathrm{i}$.
Since $(\pi_\mathrm{i}, g)$ is the invalidating path, we know that $\max\{ \pi_\mathrm{i}, g \} < \max\{ \pi, g \}$ holds.
This implies that $\pi_\mathrm{i} < \pi$ holds.
In particular, this means that $\pi$, a path in $T$, is not strongly locally correct: a contradiction.

\textit{Case (b).}
Path $\pi_\mathrm{i}$ ends at a non-selected candidate parent of $g$.
Path $\pi_2$ ends at the parent of $g$ in $T'$ and path $\pi_3$ ends at the last vertex of $\pi_\mathrm{i}$.
Let $nca$ denote the NCA of the endpoints of $\pi_2$ and $\pi_3$.
The first vertex of $\pi_\mathrm{i}$ is $nca$ or one of its ancestors.
Both path $\pi_2$ and $\pi_3$ start at $nca$.
Let $\pi_1$ be the path from $\pi_\mathrm{i}(1)$ to $nca$.
Since the endpoint of $\pi_2$ was chosen as parent over the endpoint of $\pi_3$, we know that $\pi_2 \leq \pi_3$ holds.
Furthermore, since $(\pi_\mathrm{i}, g)$ is the invalidating path, we know that $\max\{ \pi_\mathrm{i}, g \} < \max\{ \pi_1, \pi_2, g \}$ holds.
These two inequalities imply $\max\{ \pi_\mathrm{i}, g \} < \max\{ \pi_1, \pi_3, g \}$ holds.
This in turn implies that $\pi_\mathrm{i} <  \max\{ \pi_1, \pi_3 \}$ must hold.
Since $(\pi_1, \pi_3)$ is a path in $T$ and the inequality implies that it is not strongly locally correct, we again have a contradiction.

\textit{Case (c).}
Path $\pi_\mathrm{i}$ ends at a non-selected candidate parent of $g$.
Path $\pi_2$ ends at the parent of $g$ in $T'$ and path $\pi_3$ ends at the last vertex of $\pi_\mathrm{i}$.
Let $nca$ denote the NCA of the endpoints of $\pi_2$ and $\pi_3$.
The first vertex of $\pi_\mathrm{i}$ is a descendant of $nca$.
Path $\pi_2$ starts at $\pi_\mathrm{i}(1)$ and $\pi_3$ starts at $nca$.
Let $\pi_1$ be the path from $nca$ to $\pi_\mathrm{i}(1)$.
In this case, we must explicitly consider the possibility of two paths having equal values.
Hence, we distinguish two subcases.

\textit{Case (c-1).}
In the first subcase, we assume that the endpoint of $\pi_2$ was chosen as parent since its maximum value is strictly lower: $\max\{ \pi_1, \pi_2 \} < \pi_3$ holds.
Since $(\pi_\mathrm{i}, g)$ is the invalidating path, we know that $\max\{ \pi_\mathrm{i}, g \} < \max\{ \pi_2, g \}$ holds.
Since $\pi_2 \leq \max\{ \pi_1, \pi_2 \}$ always holds, we obtain that $\max\{ \pi_\mathrm{i}, g \} < \max\{ \pi_3, g \}$ must hold.
This in turn implies that $\pi_\mathrm{i} <  \pi_3$ holds.
Similarly, since $\pi_1 \leq \max\{ \pi_1, \pi_2 \}$, we know that $\pi_1 < \pi_3$ must hold.
Combining these last two inequalities yields $\max\{ \pi_1, \pi_\mathrm{i} \} < \pi_3$.
Since $\pi_3$ is a path in $T$ and the inequality implies that it is not strongly locally correct, we again have a contradiction.
(Note that with $\max\{ \pi_1, \pi_2 \} \leq \pi_3$, we can at best derive $\max\{ \pi_1, \pi_\mathrm{i} \} \leq \pi_3$ which is not strong enough to contradict the invariant on $T$.)

\textit{Case (c-2).}
In the second subcase, we assume that the endpoint of $\pi_2$ was chosen as parent based on the preference order: the maximum values are equal, thus $\max\{ \pi_1, \pi_2 \} = \pi_3$ holds.
We now subdivide $\pi_\mathrm{i}$ into two parts, $\pi_\mathrm{ia}$ and $\pi_\mathrm{ib}$.
$\pi_\mathrm{ia}$ runs from the first vertex of $\pi_\mathrm{i}$ up to the first vertex along $\pi_\mathrm{i}$ that is an ancestor in $T'$ of candidate parent of $g$ that is used by $\pi_\mathrm{i}$.
At the same point, we also split path $\pi_3$ into $\pi_\mathrm{3a}$ and $\pi_\mathrm{3b}$.
We now obtain two more cases, $\pi_\mathrm{3a} < \max\{ \pi_1, \pi_\mathrm{ia}\}$ and $\pi_\mathrm{3a} \geq \max\{ \pi_1, \pi_\mathrm{ia}\}$.
In the former case, we obtain that $\max\{\pi_\mathrm{3a}, \pi_\mathrm{ib}\} < \max\{ \pi_1, \pi_2\}$ holds and thus $(\pi_\mathrm{3a}, \pi_\mathrm{ib}, g)$ is also an invalidating path.
Since this path starts at the NCA of $\pi_2$ and $\pi_3$, this is already covered by case (b).
In the latter case, we have that either path $\pi_\mathrm{3a}$---which is in $T$---is not strongly locally correct (contradicting the induction hypothesis) or there is equality between the two paths $(\pi_1, \pi_\mathrm{ia})$ and $\pi_\mathrm{3a}$.
In case of equality, we observe that $(\pi_1, \pi_\mathrm{ia})$ and $\pi_\mathrm{3a}$ arrive at their endpoint in the same order as $\pi_2$ and $\pi_3$ arrive at $g$.
Thus $T$ does not adhere to the preference order to break ties.
This contradicts the invariant.
\end{proof}

\paragraph{Execution time}
When a dead path $\pi_\mathrm{d}$ is removed, we may need to extend a list of incoming shortcuts at $\pi_\mathrm{d}(1)$, the node that remains in $T$.
Let $k$ denote the number of nodes in $\pi_\mathrm{d}$.
The lemma below relates the number of extended shortcuts to the size of $\pi_\mathrm{d}$.
The main observation is that the path requiring extensions starts at $\pi_\mathrm{d}(1)$ and ends at either $G[i-1,j]$ or $G[i,j-1]$, since $G[i,j]$ has not yet received any shortcuts.

\begin{lemma}\label{lem:deadpath}
A dead path $\pi_\mathrm{d}$ with $k$ nodes can result in at most $2 \cdot k - 1$ extensions.
\end{lemma}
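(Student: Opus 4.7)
The plan is to tie each extension to a single node lying immediately alongside the dead path $\pi_\mathrm{d}$ in the grid, and then to bound the number of such nodes in terms of $k$ using the monotonicity of $\pi_\mathrm{d}$. When $\pi_\mathrm{d}$ is removed, the faces adjacent to $\pi_\mathrm{d}$ merge with the face incident to $\pi_\mathrm{d}(1)$, and every sink that formerly sat on or next to $\pi_\mathrm{d}$ is replaced by the new sink $\pi_\mathrm{d}(1)$. Thus extensions count exactly the shortcuts that must be redirected to this new sink with an updated maximum.

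The first step is to pin down which nodes host a shortcut that is actually extended. I would invoke the observation stated just before the lemma: because $G[i,j]$ has not yet received any shortcut, the shortcuts requiring extension all sit on one side of $\pi_\mathrm{d}$, and the nodes holding them form a single monotonous path in $T$ that starts at $\pi_\mathrm{d}(1)$ and ends at either $G[i-1,j]$ or $G[i,j-1]$. Each such node accounts for exactly one extension, so the bound reduces to bounding the length of this side-path.

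The second step is a direct geometric count by walking along $\pi_\mathrm{d}$ in lockstep with the side-path. Since $\pi_\mathrm{d}$ is monotonous in the grid with $k - 1$ edges, each horizontal or vertical edge causes the side-path to advance by at most one node, while each diagonal edge causes it to advance by at most two nodes (the side-path must detour around the diagonal). Starting at $\pi_\mathrm{d}(1)$ and accumulating, the side-path has at most $1 + 2(k - 1) = 2k - 1$ nodes, which gives the claimed bound.

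The main obstacle, I anticipate, will be verifying the structural claim that the set of shortcut-extending nodes really is a single monotonous side-path rather than several disjoint pieces: in particular, one must check that no shortcuts on the ``opposite'' side of $\pi_\mathrm{d}$ or at ancestors of $\pi_\mathrm{d}(1)$ are touched, and that at a diagonal edge of $\pi_\mathrm{d}$ the merge of two faces does not cascade to yet further sinks. The fact that $G[i,j]$ is still shortcut-less at the time of removal, combined with the way faces merge strictly onto $\pi_\mathrm{d}(1)$, should confine the effect to the one side-path described above and thus seal the counting argument.
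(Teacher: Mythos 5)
Your overall approach is the same as the paper's: identify the single path $\pi_\mathrm{e}$ in $T$ holding the shortcuts that need extension, note that it starts at $\pi_\mathrm{d}(1)$ and ends at $G[i-1,j]$ or $G[i,j-1]$ because $G[i,j]$ has no shortcuts yet, and then bound its length using monotonicity. However, your counting step contains two compensating off-by-one slips. The lockstep charge (one node per horizontal/vertical edge of $\pi_\mathrm{d}$, two per diagonal edge) only accounts for the portion of $\pi_\mathrm{e}$ lying alongside the column/row range spanned by $\pi_\mathrm{d}$; since $\pi_\mathrm{e}$ ends one column or row \emph{beyond} the endpoint $G[i-1,j-1]$ of $\pi_\mathrm{d}$, it can have up to $2k$ nodes, not $2k-1$. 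For instance, if $\pi_\mathrm{d}$ consists of $k-1$ diagonal steps, $\pi_\mathrm{e}$ can be a pure horizontal/vertical staircase from $\pi_\mathrm{d}(1)$ to $G[i,j-1]$ spanning $k+1$ columns and $k$ rows, hence $2k$ nodes, exceeding your claimed bound of $1+2(k-1)$.

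The bound of the lemma is nevertheless saved because your other claim, ``each such node accounts for exactly one extension,'' is also off by one: the first node $\pi_\mathrm{e}(1)=\pi_\mathrm{d}(1)$ is the (new) sink itself and carries no shortcut to itself, so it contributes no extension. The paper makes exactly this correction explicit: it bounds $\pi_\mathrm{e}$ by $2k$ nodes via the column/row spans ($\pi_\mathrm{d}$ spans at most $k$ columns and $k$ rows, so $\pi_\mathrm{e}$ spans at most $k+1$ of one and $k$ of the other) and then subtracts one for $\pi_\mathrm{e}(1)$, yielding $2k-1$ extensions. To repair your write-up, either adopt that span-based count, or extend your edge-charging by one node for the final step to $G[i-1,j]$ or $G[i,j-1]$ and explicitly exclude the start node from the extension count; the structural claims in your first step (no effect on the opposite side of $\pi_\mathrm{d}$ or above $\pi_\mathrm{d}(1)$) are consistent with the paper's argument and need no change.
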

\begin{proof}%
\begin{figure}[t]
\centering
\includegraphics{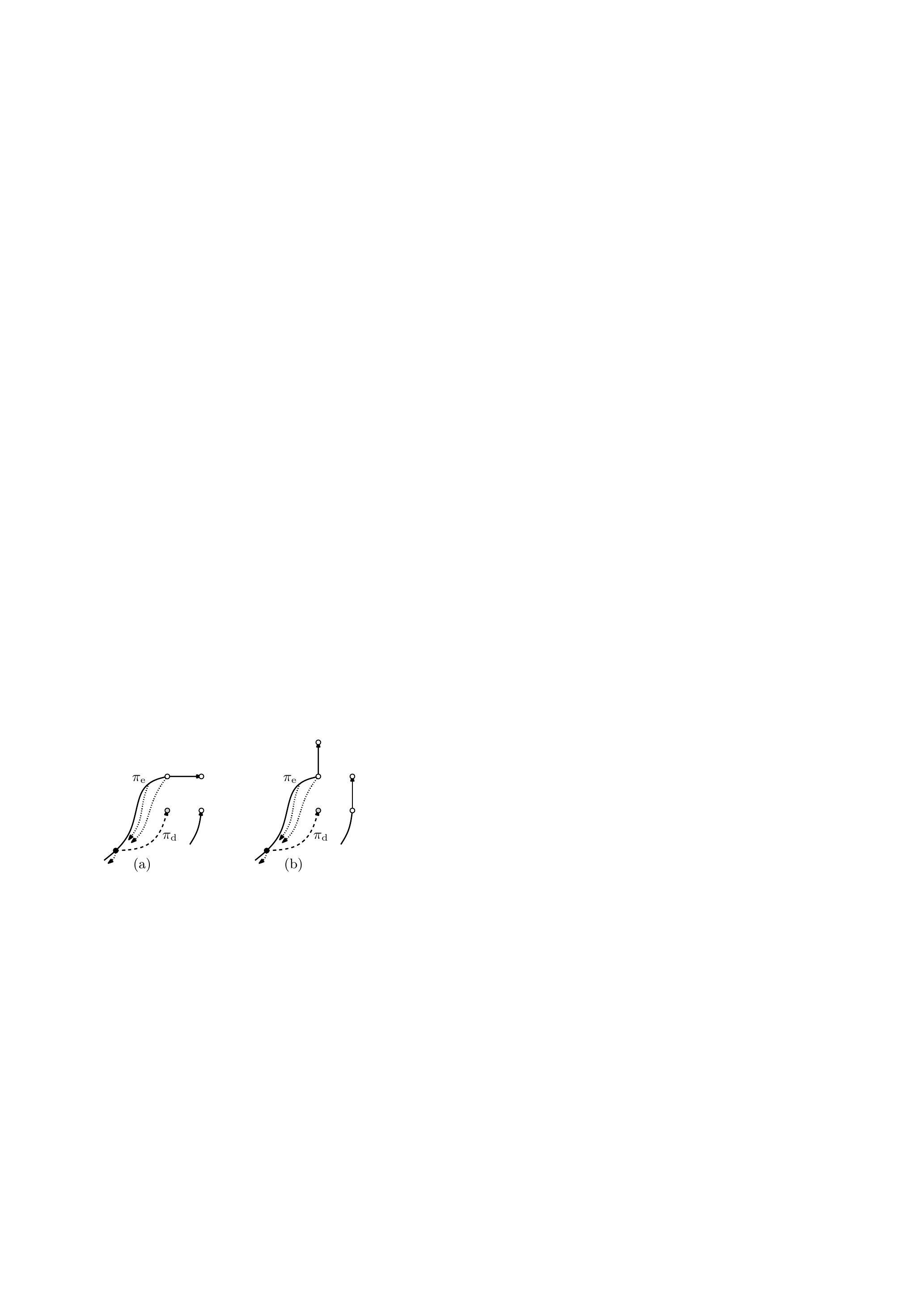}
\shrinkfiguregap
\caption{(a) A dead path $\pi_\mathrm{d}$ and the corresponding path requiring extensions $\pi_\mathrm{e}$.\newline(b) Endpoint of $\pi_\mathrm{e}$ has outdegree 1. None of its descendant has a shortcut to $\pi_\mathrm{e}(1)$.}
\label{fig:DeadPath}
\end{figure}
Since $\pi_\mathrm{d}$ is a path with $k$ nodes, it spans at most $k$ columns and $k$ rows.
When a dead path is removed, its endpoint is $G[i-1,j-1]$.
Let $\pi_\mathrm{e}$ denote the path of $T$ that requires extensions.
We know that both paths start at the same node: $\pi_\mathrm{d}(1) = \pi_\mathrm{e}(1)$.
The endpoint of $\pi_\mathrm{e}$ is at either $G[i-1,j]$ or $G[i,j-1]$, since $G[i,j]$ has not yet received shortcuts when the dead path is removed.
Also, note that if the endpoint of $\pi_\mathrm{e}$ is not the parent of $G[i,j]$ and has outdegree higher than 0, then it is a growth node and its descendants are also growth nodes.
Hence, these descendants have parent that is a growth node and thus have shortcuts that need to be extended.
Fig.~\ref{fig:DeadPath} illustrates these situations.
Hence, we know that $\pi_\mathrm{e}$ spans either $k + 1$ columns and $k$ rows or vice versa.
Therefore, the maximum number of nodes in $\pi_\mathrm{e}$ is $2 \cdot k$, since it must be monotonous.
Since $\pi_\mathrm{e}(1)$ does not have a shortcut to itself, there are at most $2 \cdot k - 1$ incoming shortcuts from $\pi_\mathrm{e}$ at $\pi_\mathrm{d}(1)$.
\end{proof}
Hence, we can charge every extension to one of the $k-1$ dead nodes (all but $\pi_\mathrm{d}(1)$).
A node gets at most 3 charges, since it is a (non-first) node of a dead path at most once.
Because an extension can be done in constant time, the execution time of the algorithm is $O(m n)$.
Note that shortcuts that originate from a living node with outdegree 1 could be removed instead of extended.
We summarize the findings of this section in the following theorem.

\begin{theorem}
Algorithm~\ref{alg:finddiscrete} computes a locally correct discrete \fm\ of two polygonal curves $P$ and $Q$ with $m$ and $n$ edges in $O(m n)$ time.
\end{theorem}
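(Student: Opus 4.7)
The plan is to combine the two lemmas already proved in this section. For correctness, Lemma~\ref{lem:discretecorrect} establishes the invariant that every path in the tree $T$ maintained by Algorithm~\ref{alg:finddiscrete} is strongly locally correct. In particular, the unique path $\pi$ in $T$ from $G[0,0]$ to $G[m,n]$ returned on the final line is strongly locally correct, and as noted directly before Lemma~\ref{lem:discretecorrect}, this implies $\pi$ is locally correct. Applied to the full path, strong local correctness also guarantees that $\max_{1<t\le k}\pi(t)$ is minimized over all monotonous paths with the same endpoints, which is precisely the discrete Fr\'echet distance between $P$ and $Q$. Hence $\pi$ is a valid discrete \fm\ that is locally correct.

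For the running time, I would first dispose of the easy pieces. Constructing the grid $G$ requires $(m+1)(n+1)$ Euclidean distance computations, so $O(mn)$ time. Initializing $T$ with the first row and column of $G$ takes $O(m+n)$ time. The main work then resides in the $mn$ calls to \texttt{AddToTree}. I would argue that, apart from the shortcut extensions triggered by removing a dead path, each call performs only a constant amount of work: choosing the parent requires comparing three candidates via their precomputed shortcuts to the pairwise NCAs (a constant number of constant-time lookups, with ties broken by the fixed preference order), and creating the at most three new shortcuts for $G[i-1,j]$, $G[i,j-1]$, and $G[i,j]$ can each be done in constant time by composing with the shortcut stored at the parent when the parent is still a growth node.

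The main obstacle is bounding the aggregate cost of shortcut extensions that occur whenever a dead path is removed. Here I would invoke Lemma~\ref{lem:deadpath}: a dead path of $k$ nodes incurs at most $2k-1$ shortcut extensions, each performed in constant time. I would amortize these extensions against the $k-1$ non-first nodes of the removed dead path, giving each such node at most three units of charge. Since a node of $G$ can appear as a non-first node of a dead path at most once in the entire execution (after which it is permanently removed from $T$), every node is charged only $O(1)$ times in total. Summing over all $mn$ nodes gives an overall extension cost of $O(mn)$, and combining with the other steps yields the claimed $O(mn)$ running time.
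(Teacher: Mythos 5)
Your proposal is correct and follows essentially the same route as the paper: correctness via the strong local correctness invariant of Lemma~\ref{lem:discretecorrect} (which also yields that the returned path realizes the discrete Fr\'echet distance), and an $O(mn)$ time bound by doing constant work per \texttt{AddToTree} call and amortizing shortcut extensions via Lemma~\ref{lem:deadpath}, charging each extension to a non-first node of the removed dead path, which is charged $O(1)$ times overall.
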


\section{Conclusion}\label{sec:conclusion}

We set out to find ``good'' matchings between two curves. 
To this end we introduced the local correctness criterion for \fm s.
We have proven that there always exists at least one locally correct \fm\ between any two polygonal curves.
This proof resulted in an $O(N^3 \log N)$ algorithm, where $N$ is the total number of edges in the two curves.
Furthermore, we considered computing a locally correct matching using the discrete \fd.
By maintaining a tree with shortcuts to encode locally correct partial matchings, we have shown how to compute such a matching in $O(N^2)$ time.

\paragraph{Future work}
Computing a locally correct discrete \fm\ takes $O(N^2)$ time, just like the dynamic program to compute only the discrete \fd.
However, computing a locally correct continuous \fm\ takes $O(N^3 \log N)$ time, a linear factor more than computing the \fd.
An interesting question is whether this gap in computation can be reduced as well.

Furthermore, it would be interesting to investigate the benefit of local correctness for other matching-based similarity measures, such as the geodesic width~\cite{Efrat2002}.

\bibliographystyle{abbrv}
\bibliography{references}

\end{document}